\newtheorem{theorem}{Theorem}[section]
\newtheorem{lemma}[theorem]{Lemma}
\newtheorem{myclaim}[theorem]{Claim}
\theoremstyle{definition}
\newtheorem{definition}{Definition}[section]
\def\cqedsymbol{\ifmmode$\lrcorner$\else{\unskip\nobreak\hfil
\penalty50\hskip1em\null\nobreak\hfil$\lrcorner$
\parfillskip=0pt\finalhyphendemerits=0\endgraf}\fi} 
\newcommand{\cqed}{\renewcommand{\qed}{\cqedsymbol}}
\newcommand{\executeiffilenewer}[3]{%
\ifnum\pdfstrcmp{\pdffilemoddate{#1}}%
{\pdffilemoddate{#2}}>0%
{\immediate\write18{#3}}\fi%
} 
\newcommand{%
\executeiffilenewer{figures/.svg}{figures/.pdf}%
{inkscape -z -D --file=figures/.svg %
--export-pdf=figures/.pdf --export-latex}%
{\input{figures/.pdf_tex}}}[1]{%
\executeiffilenewer{figures/#1.svg}{figures/#1.pdf}%
{inkscape -z -D --file=figures/#1.svg %
--export-pdf=figures/#1.pdf --export-latex}%
{\input{figures/#1.pdf_tex}}}%
\newcommand{\dirmc}{\textsc{Directed Multicut}}
\newcommand{\subiso}{\textsc{Partitioned Subgraph Isomorphism}}
\newcommand{\terms}{\mathcal{T}}
\newcommand{\stor}{\textsc{Steiner Orientation}}
\newcommand{\Oh}{\mathcal{O}}
\title{Directed multicut is $W[1]$-hard, even for four terminal pairs\thanks{An extended abstract of this work has been presented at SODA 2016~\cite{soda}.}}
\author{Marcin Pilipczuk\thanks{University of Warsaw, Poland, \texttt{malcin@mimuw.edu.pl}. Research done while the author was at University of Warwick, partially supported by the Centre for Discrete Mathematics and its Applications (DIMAP) at the University of Warwick and by Warwick-QMUL Alliance in Advances in Discrete Mathematics and its Applications.} \and Magnus Wahlstr\"{o}m\thanks{Royal Holloway University of London, UK, \texttt{Magnus.Wahlstrom@rhul.ac.uk}}}
\date{}
\begin{document}
\pagenumbering{gobble}
\thispagestyle{empty}

\maketitle

\begin{abstract}
We prove that \textsc{Multicut} in directed graphs, parameterized by the size of the cutset, is $W[1]$-hard and hence unlikely
to be fixed-parameter tractable even if restricted to instances with only four terminal pairs. 
This negative result almost completely resolves
one of the central open problems in the area of parameterized complexity of graph separation problems,
posted originally by Marx and Razgon [SIAM J. Comput. 43(2):355--388 (2014)], leaving only the case of three terminal pairs open.
The case of two terminal pairs was shown to be FPT by Chitnis et al.~[SIAM J. Comput. 42(4):1674--1696 (2013)].

Our gadget methodology also allows us to prove $W[1]$-hardness of the \textsc{Steiner Orientation} problem
parameterized by the number of terminal pairs, resolving an open problem of Cygan, Kortsarz, and Nutov [SIAM J. Discrete Math. 27(3):1503-1513 (2013)].
\end{abstract}

%\clearpage
%\pagenumbering{arabic}

\section{Introduction}\label{sec:intro}
The study of cuts and flows is one of the central
areas of combinatorial optimization,
with decades of intensive research on polynomial-time exact and approximate algorithms.
%both for the basic min-cut problem and for more general graph separation problems.
Since the seminal work of Marx~\cite{marx-impsep}, 
it has also become one of the most dynamic research directions in
parameterized complexity, under the name of \emph{graph separation problems}.

The key contribution of Marx' paper~\cite{marx-impsep} lies in the notion
of an important separator, which is a kind of greedy cut that is useful in
some graph cut problems. Marx showed that the number of important
separators of size up to $k$ in a graph is bounded as a function of $k$
alone, and gave applications to a number of graph separation problems, in
particular an FPT algorithm for \textsc{Multiway cut}~\cite{marx-impsep}.\footnote{In the \textsc{Multiway Cut} problem, the input consists of a graph $G$, a set $T \subseteq V(G)$ of terminals, and an integer $p$; the goal is to delete at most $p$ edges (in the edge-deletion variant) or nonterminal vertices (in the node-deletion variant) so that every terminal lies in a different connected component
    of the resulting graph, that is, to separate all pairs of terminals.}
(An improved bound was given by Chen et al.~\cite{ChenLL09}, with a simple
but influential proof method.) These techniques also led to FPT algorithms
for \textsc{Directed Feedback Vertex Set}~\cite{dfvs-alg} and eventually
for \textsc{Almost 2-SAT}~\cite{a2sat-alg}, which is a powerful 
generic problem encapsulating a number of other well-studied problems, 
such as \textsc{Odd Cycle Transversal} 
and \textsc{Vertex Cover} parameterized by the excess above a maximum matching.
Further study of the graph separation problems in the realm of parameterized complexity
revealed a plethora of other algorithmic techniques:
the technique of shadow removal~\cite{marx-razgon} lead to an FPT algorithm
for the directed version of \textsc{Multiway Cut}~\cite{dmwc-alg},
the framework of randomized contractions~\cite{rand-contr}
lead to an FPT algorithm for \textsc{Minimum Bisection}~\cite{bisection-alg},
whereas the idea to guide a branching algorithm
by the solution to the LP relaxation~\cite{mwc-lp}
lead not only to the fastest known algorithm for \textsc{Odd Cycle Transversal}~\cite{vc-lp},
but also can be cast into a very general CSP framework~\cite{magnus-lp}.

In tandem with these developments, we have gained a growing understanding
of the structure of bounded-size cuts in graphs. Starting again with
the proof that bounded size important separators are bounded in number, we
wish to highlight two further main contributions in this vein. First,
Marx et al.~\cite{cuts-treewidth} showed that all inclusion-minimal
$st$-cuts of size at most $k$ in a graph $G$ (for fixed vertices $s, t$) 
are contained in (essentially) a subgraph of $G$ of treewidth at most $f(k)$. 
This statement is attractive in its simplicity, although algorithmic
applications of it tend to have a bad running time dependency on $k$
(since $f(k)$ is already exponential, and a treewidth dynamic programming
algorithm would normally have running time exponential in $f(k)$). 
Second, the FPT algorithm for \textsc{Minimum Bisection}~\cite{bisection-alg}
involves a technically intricate but very powerful decomposition of the input
graph in a tree-like manner into ``nearly unbreakable'' parts
(see the paper for details). However, these two results apply only to
undirected graphs, and for directed graphs we have a much weaker grasp of the structure.

A central milestone in the above developments was the discovery in 2010 of
FPT algorithms for \textsc{Multicut}, a robust generalization of
\textsc{Multiway Cut}, parameterized by the size of the cutset
only~\cite{bousquet,marx-razgon}. In this problem we are given a graph
$G$, a family $\terms$ of $k$ terminal pairs, and an integer $p$; the goal
is to delete at most $p$ nonterminal vertices from the graph such that all
terminal pairs are separated.\footnote{In this paper we use consistently
  the node-deletion variant of \textsc{Multicut}; note that it is
  equivalent to the edge-deletion setting in the directed case, and
  strictly more general in the undirected case.}
Marx and Razgon~\cite{marx-razgon} also proved that a similar result is unlikely
in directed graphs, showing that \textsc{Directed Multicut} is $W[1]$-hard parameterized
by the cutset only. In contrast, \textsc{Directed Multiway Cut} is FPT
by the same parameter~\cite{dmwc-alg}. 
Marx and Razgon~\cite{marx-razgon} asked about the complexity of \textsc{Directed Multicut}
when you restrict also the number of terminal pairs, either assuming it is constant
or by adding it to the parameter. 
Since then, this has become one of the most important open problems in the study
of graph separation problems in parameterized complexity.
It was positively resolved for DAGs (i.e., the problem is FPT on directed
acyclic graphs parameterized by both the cutset and the number of terminals)~\cite{dags-alg},
but otherwise little progress has been made.
In particular, it has been repeated in the survey of Marx~\cite{festschrift}
on future directions of parameterized complexity, where also the study of cut
problems in directed graphs has been identified as an important research direction.
Similarly, the question of the structure of bounded-size directed cuts (e.g., the existence 
of a directed version of~\cite{cuts-treewidth}) has been floating around the community.

\paragraph{Our results.}
We almost completely resolve the question of Marx and Razgon~\cite{marx-razgon}:

\begin{theorem}\label{thm:dirmc-lb}
\dirmc{} is $W[1]$-hard when parameterized by $p$, the size of the cutset,
even if restricted to instances with only four terminal pairs.
Furthermore, assuming the Exponential Time Hypothesis (ETH), there is no algorithm
solving $n$-vertex instances of \dirmc{} with four terminal pairs in time
$f(p) n^{o(p / \log p)}$ for any computable function $f$.
\end{theorem}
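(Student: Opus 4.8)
I would prove Theorem~\ref{thm:dirmc-lb} by a polynomial-time parameterized reduction from \subiso{}: given a host graph $G$ on $N$ vertices, a pattern graph $H$ on vertex set $[k]$, and a partition $V(G)=V_1\uplus\dots\uplus V_k$, decide whether there is a subgraph isomorphism $\phi\colon V(H)\to V(G)$ with $\phi(i)\in V_i$ for every $i\in[k]$. This problem is $W[1]$-hard parameterized by $k$ (it contains \mcclique{} as the case $H=K_k$), and, for pattern graphs $H$ of treewidth $\Theta(k)$ such as cliques, it admits no $f(k)\,N^{o(k/\log k)}$ algorithm under ETH by Marx's treewidth-based lower bound. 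Hence it suffices to construct in polynomial time a \dirmc{} instance with \emph{exactly four} terminal pairs, on $n=N^{\Oh(1)}$ vertices and with cutset budget $p=\Oh(k)$, that is a yes-instance if and only if the given \subiso{} instance is: the $W[1]$-hardness part then follows at once, and the ETH part follows because a hypothetical $f(p)\,n^{o(p/\log p)}$ algorithm, run through this reduction with $p=\Theta(k)$ and $n=N^{\Oh(1)}$, would decide \subiso{} in time $f'(k)\,N^{o(k/\log k)}$ for some computable $f'$.

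The gadget architecture I have in mind encodes the sought map $\phi$ by the cutset itself. For every pattern vertex $i\in[k]$ I would build a \emph{selection gadget} $S_i$, a long directed sub-network whose ``active'' vertices are indexed by the host vertices in $V_i$, arranged so that under a tight global budget any minimal multicut must delete exactly one active vertex of $S_i$, whose index is read off as the value $\phi(i)$. For every pattern edge $\{i,j\}\in E(H)$ I would build a \emph{verification gadget} $C_{ij}$ that offers an alternative directed route whose only cheap severings correspond to pairs $(x,y)\in V_i\times V_j$ with $xy\in E(G)$ --- so that a globally consistent, edge-respecting guess lets $C_{ij}$ be cut ``for free'' by the single deletions already made inside $S_i$ and $S_j$, whereas an inconsistent or non-edge guess forces an extra deletion the budget cannot afford. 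Such selection/consistency gadgets are by now a standard design pattern; the delicate point is to wire all $\Oh(k)$ selection gadgets and all $\Oh(|E(H)|)$ verification gadgets into a \dirmc{} instance with only four terminal pairs. I would attach all of these sub-networks \emph{in parallel} to a constant number of ``global'' source/sink pairs $(s_a,t_a)$: then each demand $(s_a,t_a)$ forces the cutset to sever every sub-network on its channel, the tight budget forces this to be done with the minimum number of deletions per sub-network, and --- crucially --- the same deletions that realise the selection in $S_i$ are reused by the verification gadgets incident to $i$. The reason four channels rather than two are needed --- matching exactly the boundary set by the FPT algorithm of Chitnis et al.\ for two terminal pairs --- I would expect to be the directed nature of the instance: each adjacency of $G$ must be probed by directed routes of both orientations, and each selection gadget must be read consistently by the gadgets on either side of it, doubling two natural channels into four.

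The completeness direction should be routine: a genuine isomorphism $\phi$ dictates the ``canonical'' cutset (exactly one active vertex of each $S_i$ and nothing else), and one checks that it has size exactly $p$ and separates all four pairs. The substantial step --- and the main obstacle --- is soundness: showing that \emph{every} multicut of size at most $p$ must already have this canonical shape. Here one has to exclude every way of cheating: severing some $(s_a,t_a)$ channel through an unforeseen cheap cut, ``borrowing'' budget from one sub-network to overspend on another, splitting a selection gadget at a non-active vertex, or half-committing to two different values of $\phi(i)$. The standard remedy, which I would adopt, is to pad every critical arc and path with enough parallel copies and to choose the topology so that any deviation from canonical behaviour provably costs more than the slack in the budget; a charging argument over the sub-networks then pins the cutset to a canonical one, from which $\phi$ is recovered. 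Two bookkeeping matters need care: the construction must keep $p$ \emph{linear} in $k$ --- a $\Theta(k\log k)$ blow-up would reproduce only the weaker $N^{o(k)}$-type bound and lose the $p/\log p$ exponent --- and it must keep the vertex count polynomial in $N$, so that Marx's lower bound transfers exactly as stated.
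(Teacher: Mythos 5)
Your high-level frame coincides with the paper's (a polynomial-time reduction from \subiso{} with budget linear in the parameter, selection by a single deletion per pattern vertex, a per-edge consistency check, and a charging argument under a tight budget), but the proposal leaves out exactly the ingredient that constitutes the proof: a concrete verification gadget that checks an arbitrary binary relation between two selection gadgets while using only a \emph{constant} number of terminal pairs. You dismiss this as ``a standard design pattern'' whose only delicacy is the wiring; in fact this gadget is the crux --- with $\Theta(k)$ terminal pairs such reductions are indeed routine (and were known from Marx and Razgon), but with $\Oh(1)$ pairs no standard gadget works, which is precisely why the problem stood open. The paper's solution is specific and not generic: selections live on \emph{bidirected} paths with heavy vertices (weight $M\cdot\deg_H(i)$ on the $z$-path for $i$, weight $M$ on the $x$- and $y$-paths), and the edge check for $ij\in E(H)$, $i<j$, is an \emph{acyclic $n\times n$ grid} $P^{i,j}$ whose vertex $p^{i,j}_{a,b}$ is deletable (light) iff $v^i_a v^j_b\in E(G)$; after the forced path deletions, one commodity survives only along the $\phi(i)$-th row and another only along the $\phi(j)$-th column, and both can be killed by the single light vertex at their crossing iff the corresponding edge exists. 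Your proposal neither supplies such a gadget nor a substitute, so the soundness direction cannot even be started.

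Two further points where the sketch would not go through as written. First, the role of the four terminal pairs is not ``both orientations of each adjacency'': in the paper two pairs force one heavy deletion per path and one-sided alignment of the cut positions (going ``forward'' along $X^{i,j}\to Z^i$ and $Z^i\to Y^{i,j}$), and the other two pairs, split according to whether $i<j$ or $i>j$, force the \emph{reverse}-direction alignment (yielding exact equality of the three cut indices) \emph{and} carry the row/column commodities through the grid; your ``read consistently by the gadgets on either side'' hand-wave does not produce the exact-equality argument, and getting it with only four pairs is part of the difficulty. Second, the ETH bookkeeping is off: you invoke the lower bound for $H=K_k$, but then any reduction in which each pattern edge consumes at least one unit of budget (as in the paper, where each grid costs one light deletion; and any gadget cut ``for free'' would be even harder to design) gives $p=\Theta(k^2)$, and an $f(p)n^{o(p/\log p)}$ algorithm then only yields $N^{o(k^2/\log k)}$ for clique, which ETH does not exclude. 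The correct starting point is \subiso{} parameterized by $k=|E(H)|$ (Theorem~\ref{thm:subiso-lb}, with bounded-degree patterns), which is what makes $p=\Oh(|E(H)|)$ suffice for the stated $n^{o(p/\log p)}$ bound.
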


Since \dirmc{} is polynomial-time solvable for one terminal pair, and reduces
to \textsc{Directed Multiway Cut} for two terminal pairs~\cite{dmwc-alg},
only the case of three terminal pairs remains open.

The Exponential Time Hypothesis (ETH)~\cite{eth}, now a standard
lower bound assumption in parameterized complexity and moderately-exponential
algorithms (cf.~\cite{bulletin}), essentially asserts
that there exists no algorithm verifying satisfiability
of 3-CNF SAT formulae in time subexponential in the number of variables.
The ETH lower bound of Theorem~\ref{thm:dirmc-lb} implies that a brute-force
solution running in time $n^{\Oh(p)}$ is close to optimal.

Furthermore, we observe that our gadgets and reduction outline, after minor modifications,
gives also a lower bound for the \stor{} problem, answering an open question
of~\cite{stor}. In the \stor{} problem, given a mixed graph $G$ with $k$ terminal
pairs $\terms$, one asks for an orientation of all undirected edges of $G$, such 
that for every terminal pair $(s,t) \in \terms$ there is a path from $s$ to $t$
in the oriented graph.

\begin{theorem}\label{thm:stor-lb}
\stor{} is $W[1]$-hard when parameterized by $k$, the number of terminal pairs.
Furthermore, assuming ETH, there is no algorithm solving $n$-vertex
instances of \stor{} in time $f(k) n^{o(k / \log k)}$ for any computable function
$f$.
\end{theorem}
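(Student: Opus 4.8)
The plan is to reuse almost the entire construction behind Theorem~\ref{thm:dirmc-lb}, discarding only the ingredient that is specific to squeezing the number of terminal pairs down to four. Recall that the reduction for \dirmc{} starts from \subiso{} --- which, parameterized by the number $\ell$ of edges of the pattern, is $W[1]$-hard and, by Marx's lower bound, admits no $f(\ell)\,n^{o(\ell/\log \ell)}$ algorithm under ETH --- and builds a directed graph from ``selection'' gadgets (one per vertex of the pattern, choosing its image) together with ``consistency'' gadgets (one per edge of the pattern, checking that the chosen images are adjacent in the host graph). A solution cutset of size $p$ is forced to pick exactly one ``wire'' inside each selection gadget, and the terminal-pair demands certify that the picked wires describe a subgraph isomorphism. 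The genuinely heavy part of that proof is a routing layer that funnels the many logical demands through only four ``master'' terminal pairs. Since for \stor{} the number of terminal pairs is the parameter and is allowed to grow, I would simply omit this layer and keep one terminal pair per logical demand, so that the number $k$ of terminal pairs is $\Theta(\ell)$.

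Next I would translate the directed gadgets into mixed gadgets under the dictionary ``place edge $e$ into the cutset'' $\;\longleftrightarrow\;$ ``orient the undirected edge $e$ against the direction in which the relevant $s_i \to t_i$ path wants to traverse it''. Concretely, each bundle of candidate cut edges becomes a bundle of undirected edges: orienting one ``forward'' keeps the corresponding traversal available, orienting it ``backward'' destroys it. The selection gadget is wired to host a terminal pair whose $s \to t$ connectivity forces \emph{at least} one wire of each internal bundle forward, together with a second, antiparallel terminal pair that forces \emph{at most} one forward; thus any satisfying orientation encodes a single choice per gadget. Each consistency gadget hosts a further terminal pair that is connected precisely when the two relevant choices are compatible. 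Every directed edge of the original construction that was never eligible for the cutset remains directed and untouched.

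With the gadgets in place, the equivalence proof parallels the \dirmc{} case. For completeness, given a solution to the \subiso{} instance one orients each selection gadget according to the image of its pattern vertex, orients the remaining undirected edges in the unique ``neutral'' way, and checks gadget by gadget that every terminal pair acquires an $s \to t$ path. For soundness, from an orientation connecting all terminal pairs one reads a choice off each selection gadget --- arguing, exactly as in the cutset analysis, that the antiparallel terminals of the gadget leave no room for an orientation encoding ``zero wires'' or ``two wires'' --- and then the consistency terminals force these choices to glue into an isomorphism.

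The main obstacle is precisely this soundness step: an orientation is a single global object, and a priori different terminal pairs might be satisfiable through orientations that are ``inconsistent'' only with respect to our intended semantics, so the gadgets must be rigid enough that \emph{every} all-terminals-connecting orientation is driven into the selection-plus-consistency pattern --- this is the orientation-world analogue of arguing that every minimum \dirmc{} cutset is ``canonical'', and it is where the ``minor modifications'' are least minor. Given correctness, Theorem~\ref{thm:stor-lb} follows by bookkeeping: the \stor{} instance has size polynomial in the \subiso{} instance and uses $k = \Theta(\ell)$ terminal pairs, so an $f(k)\,N^{o(k/\log k)}$ algorithm for \stor{} would give an $f'(\ell)\,n^{o(\ell/\log \ell)}$ algorithm for \subiso{}, contradicting Marx's bound; and the same polynomial-time reduction witnesses $W[1]$-hardness.
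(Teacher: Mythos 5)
Your top-level plan (reduce from \subiso{}, keep the selection-plus-$p$-grid architecture, allow $\Theta(k)$ terminal pairs since the number of pairs is now the parameter, and close with the standard ETH bookkeeping) agrees with the paper, but the core of the proof is missing, and the dictionary you build it on --- ``put an edge in the cutset'' $\leftrightarrow$ ``orient that edge backwards'' --- does not transfer the argument. In \dirmc{} the terminal pairs must be \emph{disconnected}, a solution must kill \emph{all} $s\to t$ paths, and the whole soundness analysis leans on the weight budget (light/heavy/undeletable vertices and the $p+(M-1)c$ slack computation) to rule out cheating. In \stor{} every demand must be \emph{connected}, only one surviving path per pair is needed, and there is no budget at all: every undirected edge gets some orientation for free. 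So the quantifiers are inverted and the budget mechanism that disciplines the cutset has no orientation analogue; the step you yourself flag as the obstacle (every all-demands-connecting orientation is canonical) cannot be inherited from the cutset analysis and is exactly what your proposal leaves open.

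What the paper does instead rests on two concrete gadget ideas that are absent from your sketch. First, selection is not ``exactly one wire oriented forward'': for each $i$ one takes \emph{two} undirected paths $C^i,D^i$ joined by rung arcs $(c^i_a,d^i_a)$ and imposes the two antiparallel demands $(c^i_1,d^i_n)$ and $(c^i_n,d^i_1)$; satisfying both forces $C^i$ to be oriented \emph{towards a single peak} $c^i_{\phi(i)}$ and $D^i$ away from $d^i_{\phi(i)}$, and further antiparallel demands copy this peak onto $X^{i,j}$ and $Y^{i,j}$. Second, the adjacency check uses \emph{two} demands per $p$-grid (one routed along the chosen row, one along the chosen column), not ``a further terminal pair that is connected precisely when the two choices are compatible''; compatibility is enforced by splitting every non-edge vertex $p^{i,j}_{a,b}$ into two halves joined by a single undirected edge, so that an incompatible pair of peaks would force the row path and the column path to traverse that one undirected edge in opposite directions --- impossible. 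Your formulation presupposes a gadget whose $s\to t$ connectivity depends jointly on two independent orientation choices, which is precisely the construction that has to be invented; without the peak-orientation selection and the split-vertex trick the reduction is not established, so as it stands the proposal is an outline rather than a proof.
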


Cygan, Kortsarz, and Nutov~\cite{stor} showed an $n^{\Oh(k)}$-time algorithm;
Theorem~\ref{thm:stor-lb} implies that there is little hope for a fixed-parameter algorithm, and that the running time of the algorithm of Cygan, Kortsarz, and
Nutov is close to optimal.

We also show a structural result for directed $st$-cuts of bounded size. A
graph $G$ is \emph{$k$-cut-minimal} (for arc-cuts) if it contains a source
vertex $s$ and a sink vertex $t$, and every other vertex is incident to at
least one arc that participates in an inclusion-minimal $st$-cut of size
at most $k$. (For example, one can reduce an arbitrary graph to a $k$-cut-minimal
one by repeatedly bypassing every vertex that is not incident to any such arcs.) 
We have the following.

\begin{theorem}\label{thm:dtw-bound}
  Every $k$-cut-minimal graph has directed treewidth at most $f(k)$, where
  $f(k)=2^{\Oh(k^2)}$.
\end{theorem}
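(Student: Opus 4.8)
The plan is to go through the haven duality for directed treewidth of Johnson, Robertson, Seymour and Thomas: a digraph with no haven of order $w$ has directed treewidth $\Oh(w)$, so it is enough to show that a $k$-cut-minimal graph $G$ has no haven of order $w:=2^{\Oh(k^2)}$. Recall that if $\beta$ is a haven of order $w$ then $|Z|+|\beta(Z)|\ge w$ for every $Z$ with $|Z|<w$ (otherwise $Z':=Z\cup\beta(Z)$ has size $<w$, yet $\beta(Z')\subseteq\beta(Z)$ by monotonicity while $\beta(Z')\cap Z'=\emptyset$, forcing $\beta(Z')=\emptyset$); so it suffices to exhibit one $Z$ of size $<w$ with $\beta(Z)$ small.

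First I would record the local structure of small minimal cuts. In the arc-cut formulation (equivalent for digraphs), for an inclusion-minimal $st$-arc-cut $C$ with $|C|\le k$ let $R(C)$ be the set of vertices reachable from $s$ in $G-C$; minimality forces $C$ to be exactly the set of arcs leaving $R(C)$. The one fact I need is: if $Z$ contains the ($\le 2k$) endpoints of the arcs of $C$, then $G-Z$ has no arc from $R(C)\setminus Z$ to its complement, so every strongly connected subset of $G-Z$ lies wholly in $R(C)\setminus Z$ or wholly in $V\setminus(R(C)\cup Z)$ --- on one ``side'' of $C$. By $k$-cut-minimality every non-terminal vertex is incident to an arc of some such $C$, hence belongs to the corresponding endpoint set $Z$, and the side it lands on is determined by whether it is a tail or a head.

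Now assume $G$ has a haven $\beta$ of order $w$. I would carve $\beta$ down along a chain $\emptyset=Z^{(0)}\subsetneq Z^{(1)}\subsetneq\cdots$: given $Z^{(i)}$ still below the budget $w$, if $B_i:=\beta(Z^{(i)})$ is not yet ``small'' then it is a nonempty strongly connected set and so contains a non-terminal vertex $v_i$; picking an inclusion-minimal $\le k$-cut $C_i$ through $v_i$ and adding its $\le 2k$ endpoints to form $Z^{(i+1)}$, we get $B_{i+1}\subseteq B_i$, strictly since $v_i\in Z^{(i+1)}$, and $B_{i+1}$ lies entirely on one side of $C_i$. Thus $B_i$ is the nonempty intersection of $B_0$ with a family of ``half-spaces'' $R(C_j)$ and complements $V\setminus R(C_{j'})$, one from each earlier step, and it keeps losing vertices. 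The goal is to reach an index $i$ with $|B_i|<w-|Z^{(i)}|$, which by the first paragraph is the desired contradiction.

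The crux --- and the step I expect to be the main obstacle --- is to bound the number of carving steps before this happens by $2^{\Oh(k^2)}$ (and hence to choose $w=2^{\Oh(k^2)}$ large enough that $|Z^{(i)}|$ stays below $w$ throughout). It is \emph{not} enough to bound the number of ``cells'' of the arrangement of all inclusion-minimal $\le k$-cuts: that number is genuinely unbounded (witness a long chain of small ``diamonds''). The extra leverage is that the carved region $B_i$ is strongly connected, which tightly constrains how it can sit with respect to the cut family: a strongly connected set can straddle a cut only by looping back through it. I would run an uncrossing argument on the inclusion-minimal $\le k$-cuts to show that the position of a strongly connected set relative to the family is captured by a canonical bounded-complexity object --- essentially a laminar-like hierarchy of at most $\Oh(k)$ ``levels'', with at most $2^{\Oh(k)}$ pairwise-incomparable cuts relevant per level (in the spirit of the important-separator bound of Chen et al.) --- having the property that each carving step either strictly refines this object or else already forces $B_i$ to be small. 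Since the object has $2^{\Oh(k)\cdot\Oh(k)}=2^{\Oh(k^2)}$ possible states, the carving terminates in that many steps. Making the uncrossing work for cuts of size \emph{at most} $k$, where one lacks the lattice/distributivity available for minimum cuts, is the delicate technical point, and is also what I expect to force the quadratic exponent in $f(k)=2^{\Oh(k^2)}$. (Alternatively, the same structural hierarchy can be converted directly into an arboreal decomposition, using the $\le 2k$ endpoints of the ``level'' cuts as $\gamma$-labels, but I expect the haven route to need less bookkeeping.)
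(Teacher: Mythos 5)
Your overall framing (haven duality in place of the paper's well-linked-set duality, the observation that once $Z$ contains the endpoints of a minimal cut $C$ every strongly connected set of $G-Z$ lies on one side of $C$, and the carving chain $Z^{(0)}\subsetneq Z^{(1)}\subsetneq\cdots$) is sound, but the proof has a genuine gap exactly where you flag it: the bound of $2^{\Oh(k^2)}$ on the number of carving steps is asserted, not proved. The ``canonical bounded-complexity object'' --- a laminar-like hierarchy with $\Oh(k)$ levels and $2^{\Oh(k)}$ pairwise-incomparable relevant cuts per level, such that every carving step strictly refines it --- is precisely the kind of structural statement about the family of \emph{all} inclusion-minimal cuts of size at most $k$ that is not known and that you do not establish; the important-separator bound you invoke ``in spirit'' controls only a canonical subfamily of $4^{k}$ cuts, while the full family is unbounded and nothing in your sketch rules out the carving running for $\Omega(n)$ steps, in which case $|Z^{(i)}|$ exceeds every function of $k$ and the haven contradiction never materializes. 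Since each step only guarantees the loss of a single vertex $v_i$ from $B_i$, the termination bound carries the entire quantitative content of the theorem, and without it the argument proves nothing.

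For comparison, the paper avoids any iteration and any uncrossing. It works with well-linked sets (a digraph with no well-linked set of size more than $w$ has directed treewidth $\Oh(w)$), and the whole proof is one-shot: if $T$ is a large well-linked set, $k$-cut-minimality supplies a large family $F$ of arcs each lying in some minimal $st$-cut of size at most $k$, and Lemma~\ref{lemma:balancedcuts} produces a \emph{single} such cut $C_{uv}$ that splits $F$ in a balanced way --- more than $k$ tails still reachable from $s$ and more than $k$ heads still reaching $t$ in $G-C_{uv}$ --- which immediately contradicts well-linkedness, since $|C_{uv}|\le k$ cannot block $k+1$ disjoint paths. The engine behind that lemma is the anti-isolation lemma (at most $(k+1)4^{k+1}$ vertices can each be isolated by its own cut of size $\le k$) combined with the sunflower lemma, which together show that for all but $2^{\Oh(k^2)}$ arcs $(u,v)\in F$ the reachability sets $S_{uv}$ and co-reachability sets $T_{uv}$ hit many other arcs of $F$. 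If you want to rescue your carving scheme, this anti-isolation/sunflower machinery is also the natural tool to bound the number of steps --- but at that point you would essentially be reproving the paper's lemma, and the well-linked route needs it only once rather than along an inductive chain.
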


We note that this result is likely not as tight as it could be, both in
terms of the choice of width measure and the function $f(k)$. However, 
as discussed below, our results also show that \dirmc{} is essentially
hard already for four terminal pairs and directed pathwidth two, so for the
purpose of the \dirmc{} problem, a sharpening of Theorem~\ref{thm:dtw-bound} 
is not likely to help (even though it may be useful for other problems).

\paragraph{Discussion.}
In the presence of the lower bound of Theorem~\ref{thm:dirmc-lb}, whose
proof in this manuscript takes less than five pages including a figure and
is not very technically complex, the natural question is why the lower
bound was so elusive in the past five years, despite significant effort.

The idea to represent an $n$-wise choice in the reduction (e.g., the
choice of a vertex of a clique or an image of a vertex of the pattern graph
in the case of the \textsc{Subgraph Isomorphism} problem)
as a choice of a single-vertex cut on a
$n$-vertex bidirectional path connecting a terminal pair, seems quite
natural, especially in the light of the FPT algorithm for \dirmc{} in
DAGs~\cite{dags-alg} and a deeper study of where
this algorithm uses the assumption of acyclicity. However, a gadget to
check an arbitrary binary relation between two such choices was elusive;
we stumbled upon a ``correct'' construction by investigating the
structural question that led to Theorem~\ref{thm:dtw-bound} (the idea
being that a sufficiently powerful structural result could have been
useful in designing an efficient algorithm).

A few words about directed width measures may be in order (for more, see,
e.g.,~\cite{kreutzer-dimeas,kreutzer-width-chapter}). Two approaches
are possible. On the one hand, one can simply consider the underlying
undirected graph and ignore the arc orientations; e.g., if the underlying
undirected graph of a directed graph $G$ has bounded treewidth, then many
problems are solvable on $G$ via standard dynamic programming. On the
other hand, several width notions specific to directed graphs have been
proposed (to name a few: directed pathwidth; directed treewidth; DAG-width; Kelly-width).
Of these, directed treewidth is the most general and directed pathwidth
the most restrictive. It has been noted that these directed width notions
are rather more difficult to use in algorithms than the undirected
variants (e.g.,~\cite{GHKLOR-diwidth,GanianHKMORS10}), but on the other hand
they are more permissive (e.g., they all include DAGs as a constant-width base case).

\begin{figure*}[tb]
\begin{center}
\includegraphics{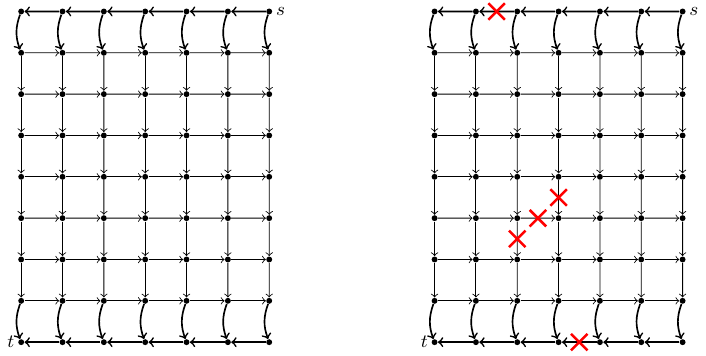}
\caption{An example of $5$-cut-minimal graph with unbounded treewidth of the underlying undirected graph.
The right panel shows a minimal cut of size five;
observe that every edge of the graph participates in a similar cut, except for a few edges in the four corners of the graph,
which can be easily covered by other small cuts.}
\label{fig:ex}
\end{center}
\end{figure*}

Our gadgets are based around a counterexample against the first of these
approaches. Figure~\ref{fig:ex} shows a graph which is 5-cut-minimal (in
fact, where \emph{every} arc participates in a minimal $st$-cut of bounded
size), but whose underlying undirected graph is a grid (and thus has
unbounded treewidth). Our hardness reduction uses this graph to route one
commodity horizontally and one vertically, allowing us to control which
pairs of flows can be killed at unit cost.

Also note that the graph in Figure~\ref{fig:ex} is acyclic; in fact, 
%it is not hard to show 
%%(e.g., using the cops and robber game characterization of directed pathwidth~\cite{dpw-game1}) 
%that
the reduction of Theorem~\ref{thm:dirmc-lb} 
outputs an instance of integer-weighted \dirmc{} of directed pathwidth 2.
Standard reductions (replacing a vertex of weight $w$ by $w$ unit-weight copies)
creates an instance of unweighted \dirmc{} with directed pathwidth bounded
polynomially in the parameter. Hence \dirmc{} remains $W[1]$-hard, even for four terminal pairs, 
if we parameterize by both the size of the cutset and the directed pathwidth of the input graph.
By the discussion above, this leaves little room for generalizing the FPT-algorithm for 
DAGs~\cite{dags-alg} to broader classes. 

%This discussion also leaves two open questions regarding
%Theorem~\ref{thm:dtw-bound}, namely, whether the results can be sharpened
%(e.g., to directed pathwidth) and what the best possible dependency $f(k)$
%is. We leave both questions for future work.

\paragraph{Organization and notation.}
We show the reduction for Theorem~\ref{thm:dirmc-lb} in Section~\ref{sec:hardness}
and for Theorem~\ref{thm:stor-lb} in Section~\ref{sec:stor}. 
We prove Theorem~\ref{thm:dtw-bound} in Section~\ref{sec:dtw-bound}.

We use standard graph notation, see e.g.~\cite{digraphs}.
Both our hardness reductions start from the $W[1]$-hard 
\subiso{} problem, parameterized by the number
of edges of the pattern graph.
An input to \subiso{} consists of two graphs $G$ and $H$ with $|E(H)| = k$,
where $V(G)$ is partitioned into $|V(H)|$ color classes, one for each
vertex of $H$: $V(G) = \biguplus_{i \in V(H)} V_i$;
the goal is to check if there exists a homomorphism
$\phi : V(H) \to V(G)$ such that $\phi(i) \in V_i$ for every $i \in V(H)$.
The $W[1]$- and ETH-hardness of \subiso{} has been shown by Marx:
\begin{theorem}[Cor.~6.3 of~\cite{subiso-lb}]\label{thm:subiso-lb}
\subiso{} is $W[1]$-hard when parameterized by $k = |E(H)|$.
Furthermore, assuming ETH, there is no algorithm solving $n$-vertex instances
of \subiso{} in time $f(k) n^{o(k / \log k)}$ for any computable
function $f$.
\end{theorem}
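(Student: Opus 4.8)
The statement has two parts, and the plan is to handle them separately.

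For the $W[1]$-hardness I would observe that \subiso{} directly generalizes \mcclique{} (the colored variant of \textsc{Clique}): from an instance of \mcclique{} with $\ell$ color classes, take $H=K_\ell$, keep $G$, and use the $\ell$ color classes as the partition of $V(G)$; then a partition-respecting homomorphism $\phi\colon V(K_\ell)\to V(G)$ is exactly a multicolored clique. Since \mcclique{} is $W[1]$-hard parameterized by $\ell$ and here $k=|E(K_\ell)|=\binom{\ell}{2}$ is a polynomially-bounded function of $\ell$, this yields $W[1]$-hardness of \subiso{} parameterized by $k$. Note that this route only gives an $n^{\Omega(\sqrt{k})}$-type ETH bound, which is why obtaining the sharper $n^{o(k/\log k)}$ bound forces one to replace the dense clique pattern by a sparse high-treewidth one.

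For the ETH lower bound I would follow the ``can you beat treewidth'' template. Start from $3$-\textsc{Coloring} on graphs of constant maximum degree, which by the sparsification lemma together with the standard (degree-preserving) reductions admits no $2^{o(N)}$-time algorithm on $N$-vertex instances under ETH. Given such an instance $D$ and a value $k$ to be fixed later as a slowly growing function of $N$, the plan is to build a pattern graph $H$ with $\Theta(k)$ vertices, $\Theta(k)$ edges and treewidth $\Omega(k/\log k)$ --- concretely a bounded-degree expander --- together with a low-congestion routing of the constraint graph of $D$ through $H$: partition $V(D)$ into $\Theta(k)$ groups of size $\Theta(N/k)$, place one group on each vertex of $H$, and route each of the $\Theta(N)$ edges of $D$ along a path in $H$ so that, by the flow/cut gap for expanders, every edge of $H$ carries $\Oh((N/k)\log k)$ such paths. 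The host $G$ is then assembled gadget by gadget: the color class $V_v$ for $v\in V(H)$ has one vertex for each consistent choice of the coloring of $v$'s own group together with the $\Oh(1)$-bit ``in transit'' states of all $D$-edges whose routing path runs through $v$, so $|V_v|\le 2^{\Oh((N/k)\log k)}$; the edges of $G$ between $V_u$ and $V_v$ enforce consistency along the $H$-edge $uv$. One checks that $n=|V(G)|\le \mathrm{poly}(N)\cdot 2^{\Oh((N/k)\log k)}$ and that partition-respecting homomorphisms $H\to G$ are in bijection with proper $3$-colorings of $D$.

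It then remains to do the counting. Suppose \subiso{} had an algorithm running in time $f(k)\,n^{h(k)}$ with $h(k)=o(k/\log k)$. Pick $k=k(N)\to\infty$ slowly enough that $f(k(N))\le 2^{N/\log N}$, which is possible because $f$ is a fixed computable (hence total) function. Composing the reduction with this algorithm then solves $3$-\textsc{Coloring} in time $\mathrm{poly}(N)+f(k)\cdot\big(\mathrm{poly}(N)\cdot 2^{\Oh((N/k)\log k)}\big)^{h(k)}$, which is $2^{o(N)}$: the reduction itself runs in $2^{\Oh((N/k)\log k)}=2^{o(N)}$ as $k(N)\to\infty$; $f(k(N))\le 2^{N/\log N}=2^{o(N)}$ by the choice of $k$; and $h(k)\cdot(N/k)\log k = N\cdot h(k)\cdot\tfrac{\log k}{k}=o(N)$ since $h(k)=o(k/\log k)$. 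This contradicts ETH. The main obstacle --- and the technical heart of Marx's argument --- is the construction in the middle paragraph: producing $H$ and a routing of $D$'s constraint graph through it that simultaneously keeps $|E(H)|=\Theta(k)$, attains congestion $\Oh((N/k)\log k)$, and makes the gadget reduction correct. This is precisely where the extra $\log k$ factor enters and appears unavoidable, mirroring the $\Theta(\log)$ gap between the treewidth of $H$ (a fractional/bramble quantity) and the number of genuinely edge-disjoint paths routable across a well-linked set of that order.
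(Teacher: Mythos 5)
This theorem is not proved in the paper at all: it is imported verbatim as Corollary~6.3 of Marx's ``Can you beat treewidth?'' paper \cite{subiso-lb}, so there is no in-paper proof to compare against. Judged against the original source, your sketch is essentially a correct reconstruction of that argument, specialized in a convenient way. The $W[1]$-hardness via $H=K_\ell$ and \mcclique{} is the standard observation and is fine. For the ETH bound, Marx actually proves a more general statement (hardness of binary CSP for \emph{every} recursively enumerable class of primal graphs of unbounded treewidth, via an embedding theorem relating treewidth to routing capacity of well-linked sets) and then derives the corollary by taking patterns of treewidth linear in the number of edges and using the CSP $\leftrightarrow$ \subiso{} correspondence; you shortcut this by fixing $H$ to be a bounded-degree expander and invoking classical expander routing to get congestion $\Oh((N/k)\log k)$, which is exactly where the $\log k$ loss enters and which suffices for this corollary while avoiding the general embedding machinery. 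One small correction: a bounded-degree expander on $\Theta(k)$ vertices has treewidth $\Theta(k)$, not merely $\Omega(k/\log k)$; the $k/\log k$ in the exponent comes from the routing congestion (as your own counting step correctly uses), not from the treewidth of $H$. The remaining pieces --- ETH-hardness of degree-bounded $3$-\textsc{Coloring} with no $2^{o(N)}$ algorithm, the grouping of $V(D)$ into $\Theta(k)$ classes with domains of size $2^{\Oh((N/k)\log k)}$ encoding group colorings plus transit states, and the choice of a slowly growing computable $k(N)$ to neutralize $f$ --- are the standard ingredients and are handled at an acceptable level of detail for a sketch, though a full write-up would need to verify integral (not just fractional) routings of the demand paths and the computability/monotonicity technicalities in choosing $k(N)$.
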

Both our reductions work in polynomial time and, given a \subiso{}
instance $(G,H)$,
output an instance with the corresponding parameter bounded linearly
in $k = |E(H)|$ and size bounded polynomially in the input size.

In both reductions, given an input instance $(G,H)$ of \subiso{}, we denote
$V(G) = \bigcup_{i \in V(H)} V_i$, $k = |E(H)|$, and $\ell = |V(H)|$.
Furthermore, without loss of generality we assume that $H$ does not have any isolated
vertices and thus $\ell \leq 2k$. Indeed,
if some color class $V_i$ of $V(G)$ is empty,
we can output a trivial no-instance,
and otherwise we can delete all isolated vertices from $H$ and their
corresponding color classes in $G$ without changing the answer to the problem.
We also fix some total order $<$ on the set $V(H)$ and, by some potential padding,
assume that $|V_i| = n$ and $V_i = \{v_a^i : 1 \leq a \leq n \}$
 for every $i \in V(H)$.

For discussion on the output of the reduction of Theorem~\ref{thm:dirmc-lb}, we need to recall the definition of directed pathwidth;
we follow here Kreutzer and Ordyniak~\cite{kreutzer-width-chapter}. 
\begin{definition}
Let $G$ be a directed graph. A \emph{directed path decomposition}
is a pair $(P,\beta)$ where $P = v_1,v_2,\ldots,v_s$ is a directed path
and $\beta: V(P) \to 2^{V(G)}$ is a mapping satisfying:
\begin{enumerate}
\item For every $v \in V(G)$ the set $\beta^{-1}(v)$ is connected in $P$
and nonempty.
\item For every arc $(u,v) \in E(G)$ there are indices $1 \leq i \leq j \leq s$
such that $u \in \beta(v_i)$ and $v \in \beta(v_j)$.
\end{enumerate}
The \emph{width} of $(P,\beta)$ is $\max_{1 \leq i \leq s} |\beta(v_i)|$,
and the \emph{directed pathwidth} of $G$ is the minimum width among all its
directed path decompositions.
For every $1 \leq i \leq s$, the set $\beta(v_i)$ is called a \emph{bag}
at node $v_i$.
\end{definition}
Note that directed pathwidth of a nonempty 
directed acyclic graph $G$ equals $1$:
we can take vertices of $G$ one-by-one in any topological ordering of $G$.

\section{Hardness for \dirmc}\label{sec:hardness}
\newcommand{\sincx}{s_{0 \to n}^{x}}
\newcommand{\tincx}{t_{0 \to n}^{x}}
\newcommand{\sincy}{s_{0 \to n}^{y}}
\newcommand{\tincy}{t_{0 \to n}^{y}}
\newcommand{\sdeclt}{s_{n \to 0}^{<}}
\newcommand{\tdeclt}{t_{n \to 0}^{<}}
\newcommand{\sdecgt}{s_{n \to 0}^{>}}
\newcommand{\tdecgt}{t_{n \to 0}^{>}}

We reduce an input \subiso{} instance $(G,H)$
to a node-weighted variant of \dirmc{}, where every non-terminal vertex has a weight being a positive integer, and the goal is to find a cutset of total weight
not exceeding the budget $p$.
We fix some integer constant $M$; in fact it suffices to set $M=2$,
but it helps to think of it as a sufficiently large integer constant.
 We use three levels of weight: there will be \emph{light vertices}, of weight $1$, \emph{heavy vertices}, of various integer weights being multiples of $M$,
usually depending on the degree of the corresponding vertex of $H$, and \emph{undeletable vertices}, of weight $p+1$. Observe that it is easy to reduce the weighted variant to the original one by replacing every vertex of weight $w$ with $w$ unit-weight vertices. Thus, it suffices to show the reduction to the weighted variant, with only four terminal pairs.

\paragraph{Construction.}
Let us now describe the construction of the (weighted)
\dirmc{} instance $(G',\terms,p)$.
We start by setting budget $p = (6M+1)k$.
We also introduce eight terminals, arranged in four terminal pairs:
\begin{equation*}
\begin{split}
(\sincx,\tincx),\quad (\sincy,\tincy), \\ \quad (\sdeclt,\tdeclt), \quad (\sdecgt,\tdecgt).
\end{split}
\end{equation*}
 
For every $i \in V(H)$, we introduce a bidirected path on $2n+1$ vertices
$$z^i_0, \hat{z}^i_1, z^i_1, \hat{z}^i_2, z^i_2, \ldots, \hat{z}^i_n, z^i_n,$$
called henceforth the \emph{$z$-path for vertex $i$}, and denoted by $Z^i$.
Similarly, for every ordered pair $(i,j)$ where $ij \in E(H)$, we introduce two bidirected paths on $2n+1$ vertices
\begin{equation*}
\begin{split}
x^{i,j}_0, \hat{x}^{i,j}_1, x^{i,j}_1, \hat{x}^{i,j}_2, x^{i,j}_2, \ldots, \hat{x}^{i,j}_n, x^{i,j}_n;\\
y^{i,j}_0, \hat{y}^{i,j}_1, y^{i,j}_1, \hat{y}^{i,j}_2, y^{i,j}_2, \ldots, \hat{y}^{i,j}_n, y^{i,j}_n.
\end{split}
\end{equation*}
We call these paths \emph{the $x$-path and the $y$-path for the pair $(i,j)$}, and denote them by $X^{i,j}$ and $Y^{i,j}$.
All vertices $z^i_a$, $x^{i,j}_a$, $y^{i,j}_a$ are undeletable, while all vertices $\hat{z}^i_a$, $\hat{x}^{i,j}_a$, $\hat{y}^{i,j}_a$ are heavy:
vertex $\hat{z}^i_a$ has weight $M \cdot \deg_H(i)$ and the vertices
$\hat{x}^{i,j}_a$ and $\hat{y}^{i,j}_a$ have weight $M$ each.

Note that so far we have created $4k + \ell$ paths, each having $2n+1$ vertices.
Furthermore, if we are to delete one heavy vertex from each of these paths,
the total cost would be
$$M \cdot \left(4k + \sum_{i \in V(H)} \deg_H(i)\right) = 6kM.$$

For every pair $(i,j)$ with $ij \in E(H)$,
and every $0 \leq a \leq n$, we add arcs $(x^{i,j}_a,z^i_a)$ and $(z^i_a,y^{i,j}_a)$. Furthermore, we attach terminals to the paths as follows.
\begin{itemize}
\item for every pair $(i,j)$ with $ij \in E(H)$,
we add arcs $(\sincx,x^{i,j}_0)$ and $(y^{i,j}_n,\tincy)$;
\item for every $i \in V(H)$ we add arcs $(\sincy,z^i_0)$ and $(z^i_n,\tincx)$; %, $(\sdeclt,z^i_n)$, $(\sdecgt,z^i_n)$, $(z^i_0,\tdeclt)$, and $(z^i_0,\tdecgt)$;
\item for every pair $(i,j)$ with $ij \in E(H)$ and $i<j$ we add arcs $(\sdeclt, x^{i,j}_n)$ and $(y^{i,j}_0,\tdeclt)$;
\item for every pair $(i,j)$ with $ij \in E(H)$ and $i > j$
we add arcs $(\sdecgt, x^{i,j}_n)$ and $(y^{i,j}_0,\tdecgt)$.
\end{itemize}

\begin{figure*}[tb]
\begin{center}
\includegraphics{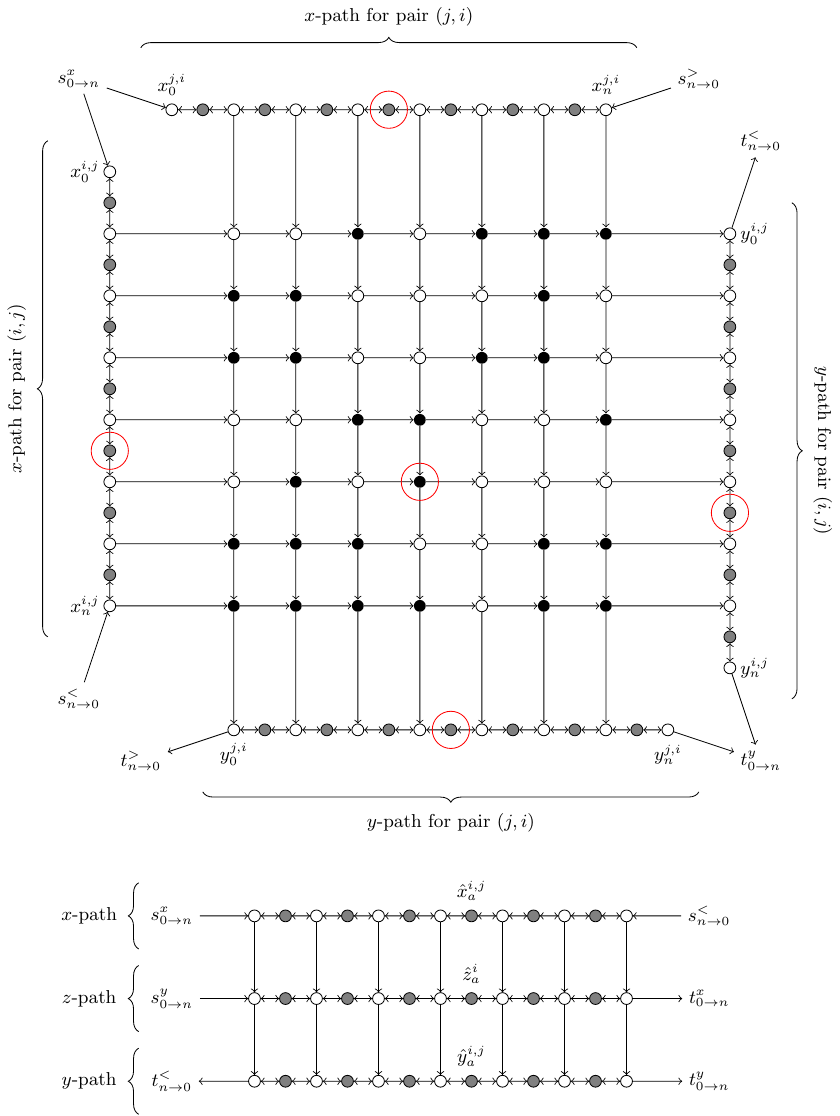}
\caption{Illustration of the reduction for \dirmc{}. Black vertices are light, gray are heavy, and white are undeletable.
The top figure illustates a $p$-grid, together with an intended solution marked by red circles.
Here, the vertex $p_{1,1}$ lies in the top-left corner of the grid, the first coordinate describes the row of the grid,
and the second one the column.
The bottom figure illustrates an $x$-, $z$-, and $y$-path for a pair $(i,j)$ with $ij \in E(H)$ and $i < j$.}
\label{fig:dirmc}
\end{center}
\end{figure*}

We refer to Figure~\ref{fig:dirmc} for an illustration.
Intuitively, the so-far constructed bidirectional paths and terminals require to delete at least one heavy vertex per 
bidirectional path; the connections between paths ensure that for every $i \in V(H)$, we need to chose
one index $1 \leq \phi(i) \leq n$ and delete vertices $\hat{z}^i_{\phi(i)}$, $\hat{x}^{i,j}_{\phi(i)}$, and $\hat{y}^{i,j}_{\phi(i)}$, that is, cut all paths corresponding to the vertex $i$ at the same place. The choice of the index $\phi(i)$ corresponds
to the choice of the image of $i$ in the sought homomorphism.

Let us now introduce gadgets that check the edge relations between the chosen vertices.
For every pair $(i,j)$ with $ij \in E(H)$, $i < j$ we introduce an acyclic $n \times n$ grid with vertices $p^{i,j}_{a,b}$ for $1 \leq a,b \leq n$
and arcs $(p^{i,j}_{a,b}, p^{i,j}_{a+1,b})$ for every $1 \leq a < n$ and $1 \leq b \leq n$, as well as 
$(p^{i,j}_{a,b}, p^{i,j}_{a,b+1})$ for every $1 \leq a \leq n$ and $1 \leq b < n$.
We call this grid \emph{$p$-grid for the pair $(i,j)$}, and denote it by $P^{i,j}$.
We set the vertex $p^{i,j}_{a,b}$ to be a light vertex if $v^i_a v^j_b \in E(G)$, and undeletable otherwise.
Finally, for every $1 \leq a \leq n$ we introduce arcs:
$$(x^{i,j}_a, p^{i,j}_{a,1}),\quad (p^{i,j}_{a,n}, y^{i,j}_{a-1}),\quad (x^{j,i}_a,p^{i,j}_{1,a}), \quad (p^{i,j}_{n,a}, y^{j,i}_{a-1}).$$
Intuitively, after deleting the aforementioned heavy vertices on the $x$-, $y$-, and $z$-paths, for fixed $ij \in E(H)$, $i < j$,
there is only one remaining path from $x^{i,j}_n$ (an out-neighbor of $\sdeclt$) to $y^{i,j}_0$ (an in-neighbor of $\tdeclt$),
passing through the $\phi(i)$-th row of the $p$-grid for the pair $(i,j)$,
and there is only one remaining path from $x^{j,i}_n$ (an out-neighbor of $\sdecgt$) to $y^{j,i}_0$ (an in-neighbor of $\tdecgt$),
passing through the $\phi(j)$-th column of the $p$-grid for the pair $(i,j)$.
We can kill both these paths with a single vertex $p^{i,j}_{\phi(i),\phi(j)}$, but only
the existence of the edge $v^i_{\phi(i)} v^j_{\phi(j)}$ ensures that this vertex is light, not undeletable.

This concludes the construction of the instance $(G',\terms,p)$.
We now formally show that the constructed instance is equivalent to the input
\subiso{} instance $(G,H)$.

\paragraph{From a homomorphism to a cutset.}
Let $\phi:V(H) \to [n]$ be such that $i \mapsto v^i_{\phi(i)} \in V_i$
is a homomorphism of $H$ into $G$.
Define
\begin{equation*}
\begin{split}
X &= \{\hat{x}^{i,j}_{\phi(i)}, \hat{y}^{i,j}_{\phi(i)} : (i,j) \in V(H) \times V(H) \textrm{\ s.t.\ }ij \in E(H)\}
\cup \\ &\cup \{\hat{z}^i_{\phi(i)} : i \in V(H) \}
\cup \{p^{i,j}_{\phi(i),\phi(j)} : (i,j) \in V(H) \times V(H) \textrm{\ s.t.\ }ij \in E(H), i < j\}.
\end{split}
\end{equation*}
The total weight of the vertices in $X$ equals:
$$2k \cdot 2 \cdot M + \sum_{i \in V(H)} M \cdot \deg_H(i) + k = (6M+1)k = p.$$
Note that the fact that 
$p^{i,j}_{\phi(i),\phi(j)}$ is light for every $ij \in E(H)$, $i < j$
follows from the assumption that the vertices $i \mapsto v^i_{\phi(i)}$ 
is a homomorphism.
Consequently, $X$ is of weight exactly $p$. We now show that it is a multicut in $(G',\terms)$.

We start with a simple observation about the structure of the graph $G'$. While
the $x$-, $y$-, and $z$-paths are bidirected, they --- together with the $p$-grids --- are arranged in a DAG-like fashion.
That is, there are directed arcs from $X^{i,j}$ to $Z^i$, from $Z^i$ to $Y^{i,j}$, from $X^{i,j}$ and $X^{j,i}$ to $P^{i,j}$, and from $P^{i,j}$ to $Y^{i,j}$ and $Y^{j,i}$,
but all cycles in $G'$ are contained in one $x$-, $y$-, or $z$-path. 
 
Consider first the terminal pair $(\sincx, \tincx)$. The out-neighbors of $\sincx$ are the endpoints $x^{i,j}_0$ for every pair $(i,j)$ with $ij \in E(H)$;
the only in-neighbors of $\tincx$ are the endpoints $z^i_n$ for every $i \in V(H)$.
Thus, by the previous observation, the only paths from $\sincx$ to $\tincx$ in the graph $G'$ start by going to some vertex $x^{i,j}_0$, traverse $X^{i,j}$ up to some vertex $x^{i,j}_a$, use the arc $(x^{i,j}_a,z^i_a)$
to fall to $Z^i$, and then traverse $Z^i$ to the endpoint $z^i_n$. 
(In particular, there are no paths from $\sincx$ to $\tincx$ that contain a vertex of some grid $P^{i,j}$.)
However, all such paths for $a \geq \phi(i)$ are cut
by the vertex $\hat{x}^{i,j}_{\phi(i)} \in X$, while all such paths for $a < \phi(i)$ are cut by the vertex $\hat{z}^i_{\phi(i)} \in X$. Consequently,
the terminal pair $(\sincx, \tincx)$ is separated in $H \setminus X$.

A similar argument holds for the pair $(\sincy,\tincy)$. By the same reasoning, the only paths between $\sincy$ and $\tincy$ in the graph $H$
are paths that start by going to some vertex $z^i_0$, traverse $Z^i$ up to some vertex $z^i_a$, use the arc
$(z^i_a,y^{i,j}_a)$ for some $ij \in E(H)$ to fall to $Y^{i,j}$, and then continue along this $y$-path to the vertex $y^{i,j}_n$.
However, all such paths for $a \geq \phi(i)$ are cut by the vertex $\hat{z}^{i}_{\phi(i)} \in X$, while all such paths for $a < \phi(i)$ are cut by the vertex $\hat{y}^{i,j}_{\phi(i)} \in X$.

Let us now focus on the terminal pair $(\sdeclt,\tdeclt)$. Observe that there are two types of paths from $\sdeclt$ to $\tdeclt$ in the graph $H$.
The first type consists of paths that starts by going to some vertex $x^{i,j}_n$ where $i < j$, traverse $X^{i,j}$ up to some vertex $x^{i,j}_a$,
use the arc $(x^{i,j}_a,z^i_a)$ to fall to $Z^i$, then traverse $Z^i$ up to some vertex $z^i_b$, use
the arc $(z^i_b,y^{i,j'}_b)$ for some $j' > i$ to fall to $Y^{i,j'}$, and finally traverse this $y$-path to the endpoint $y^{i,j'}_0$.
However, similarly as in the previous cases, the vertices $\hat{x}^{i,j}_{\phi(i)}, \hat{z}^i_{\phi(i)}, \hat{y}^{i,j'}_{\phi(i)} \in X$ cut all such paths.

The second type of paths use the $p$-grids in the following manner: the path starts by going to some vertex $x^{i,j}_n$ where $i < j$,
traverse $X^{i,j}$ up to some vertex $x^{i,j}_a$, use the arc $(x^{i,j}_a,p^{i,j}_{a,1})$ to fall to $P^{i,j}$,
traverse this $p$-grid up to a vertex $p^{i,j}_{b,n}$ where $b \geq a$, use the arc $(p^{i,j}_{b,n}, y^{i,j}_{b-1})$ to fall to $Y^{i,j}$,
and traverse this path to the endpoint $y^{i,j}_0$. These paths are cut by $X$ as follows: the paths where $a < \phi(i)$ are cut by $\hat{x}^{i,j}_{\phi(i)} \in X$,
the paths where $b > \phi(i)$ are cut by $\hat{y}^{i,j}_{\phi(i)}$, while the paths where $a=b=\phi(i)$ are cut by the vertex $p^{i,j}_{\phi(i),\phi(j)} \in X$;
note that the $\phi(i)$-th row of the grid is the only path from $p^{i,j}_{\phi(i),1}$ to $p^{i,j}_{\phi(i),n}$.
Please observe that the terminal $\tdeclt$ cannot be reached from $P^{i,j}$ by going to the other $y$-path reachable from this $p$-grid,
namely $Y^{j,i}$, as $Y^{j,i}$ has only outgoing arcs to the terminal $\tdecgt$ since $j > i$.

A similar argument holds for the pair $(\sdecgt,\tdecgt)$. The paths going through $X^{i,j}$, $i > j$,
$Z^i$, and $Y^{i,j'}$, $i > j'$, are cut by vertices
$\hat{x}^{i,j}_{\phi(i)}, \hat{z}^i_{\phi(i)}, \hat{y}^{i,j'}_{\phi(i)} \in X$.
The paths going through $X^{i,j}$, $i > j$, $P^{j,i}$, and $Y^{i,j}$,
are cut by the vertices $\hat{x}^{i,j}_{\phi(i)}, \hat{y}^{i,j}_{\phi(i)}, p^{j,i}_{\phi(j),\phi(i)}$. Again, it is essential that the other $y$-path
reachable from the $p$-grid for the pair $(j,i)$, namely $Y^{j,i}$, does not have outgoing arcs to the terminal $\tdecgt$, but only
to the terminal $\tdeclt$.

We infer that $X$ is a solution to the \dirmc{} instance $(G',\terms,p)$.

\paragraph{From a multicut to a homomorphism.}
Let $X$ be a solution to the \dirmc{} instance $(G',\terms,p)$. Our goal is to find a homomorphism of $H$ into $G$.

First, let us focus on heavy vertices in $X$. Observe that for every pair $(i,j)$, $ij \in E(H)$ the following three paths needs to be cut by $X$:
\begin{itemize}
\item
a path from $\sincx$ to $\tincx$ that traverses the entire path $X^{i,j}$ up to the vertex $x^{i,j}_n$, and uses the arc $(x^{i,j}_n,z^i_n)$ to reach
$\tincx$, 
\item a path from $\sincx$ to $\tincx$ that starts with using the arc $(x^{i,j}_0,z^i_0)$, and then traverses $Z^i$
up to the vertex $z^i_n$, and 
\item a path from $\sincy$ to $\tincy$ that starts with using the arc $(z^i_0,y^{i,j}_0)$, and then traverses $Y^{i,j}$
up to the vertex $y^{i,j}_n$.
\end{itemize}
We infer that $X$ needs to contain at least one heavy vertex on every $x$-, $y$-, and $z$-path in $H$. Recall that the total weight of exactly one heavy vertex from 
each of these paths is $6kM = p-k$. Thus, we have only $k$ slack in the budget constraint.

We say that a path $X^{i,j}$, $Y^{i,j}$, or $Z^i$ is \emph{normal}
if it contains exactly one vertex of $X$, and \emph{cheated}
otherwise.
We say that a pair $(i,j)$ for $ij \in E(H)$ is \emph{normal}
if each of the paths $X^{i,j}$, $Y^{i,j}$, $X^{j,i}$, $Y^{j,i}$, 
$Z^i$, and $Z^j$ is normal.
A pair $(i,j)$ is \emph{cheated} if it is not normal.
Note that $(i,j)$ is normal if and only if $(j,i)$ is normal.

For every $i \in V(H)$ we fix one $\phi(i) \in [n]$ such that
$\hat{z}^i_{\phi(i)} \in X$.

Fix now a normal pair $(i,j)$, $ij \in E(H)$.
Assume $i < j$; a symmetrical argument holds for $i > j$ but uses the terminal pair $(\sdecgt,\tdecgt)$
instead of $(\sdeclt,\tdeclt)$. Let $\hat{x}^{i,j}_a, \hat{z}^i_b, \hat{y}^{i,j}_c \in X$. 
Observe that $a \leq b$, as otherwise the path from $\sincx$ to $\tincx$ that traverses $X^{i,j}$ up to the vertex $x^{i,j}_{a-1}$, uses the arc $(x^{i,j}_{a-1},z^i_{a-1})$, and traverses $Z^i$ up to the endpoint $z^i_n$ is not cut by $X$, a contradiction.
A similar argument for the terminal pair $(\sincy,\tincy)$ implies that $b \leq c$.
However, if $a < b$, then the path from $\sdeclt$ to $\tdeclt$ that traverses $X^{i,j}$ up to the vertex $x^{i,j}_a$, uses the arc
$(x^{i,j}_a,z^i_a)$, traverses $Z^i$ up to the endpoint $z^i_0$, and finally uses the arc $(z^i_0,y^{i,j}_0)$, is not cut by $X$,
a contradiction. A similar argument implies gives a contradiction if $b < c$.

We infer that for every normal pair $(i,j)$ we have $\hat{z}^i_{\phi(i)} \in X$ and
$\hat{x}^{i,j}_{\phi(i)},\hat{y}^{i,j}_{\phi(i)} \in X$.

Fix now a normal pair $(i,j)$ with $i < j$. Observe that the following paths are not cut by the heavy vertices in $X$:
\begin{itemize}
\item a path from $\sdeclt$ to $\tdeclt$
that traverses $X^{i,j}$ up to the vertex $x^{i,j}_{\phi(i)}$, uses the arc $(x^{i,j}_{\phi(i)},p^{i,j}_{\phi(i),1})$,
traverses the $\phi(i)$-th row of $P^{i,j}$ up to the vertex $p^{i,j}_{\phi(i),n}$, uses the arc $(p^{i,j}_{\phi(i),n}, y^{i,j}_{\phi(i)-1})$, and traverses $Y^{i,j}$ up to the endpoint $y^{i,j}_0$;
\item a path from $\sdecgt$ to $\tdecgt$
that traverses $X^{j,i}$ up to the vertex $x^{j,i}_{\phi(j)}$, uses the arc $(x^{j,i}_{\phi(j)},p^{i,j}_{1,\phi(j)})$,
traverses the $\phi(j)$-th column of $P^{i,j}$ up to the vertex $p^{i,j}_{n,\phi(j)}$, uses the arc $(p^{i,j}_{n,\phi(j)}, y^{j,i}_{\phi(j)-1})$, and traverses $Y^{j,i}$ up to the endpoint $y^{j,i}_0$.
\end{itemize}
Consequently, $X$ needs to contain at least one light vertex in the $p$-grid
$P^{i,j}$.
Furthermore, if $X$ contains exactly one light vertex in $P^{i,j}$,
then, as the only vertex in common of the two aforementioned paths for a fixed choice of normal $(i,j)$, $i < j$, is the vertex $p^{i,j}_{\phi(i),\phi(j)}$,
we have that $p^{i,j}_{\phi(i),\phi(j)} \in X$ is a light vertex, and, by construction, $v^i_{\phi(i)} v^j_{\phi(j)} \in E(G)$.

It remains to show that every pair $(i,j)$, $ij \in E(H)$, is normal.
Indeed, if this is the case, then, as $p = 6kM + k$,
the total weight of exactly one heavy vertex on each path $X^{i,j}$, $Y^{i,j}$, and $Z^i$ is $6kM$,
and there are exactly $k$ grids $P^{i,j}$, every grid $P^{i,j}$ contains
exactly one vertex of $X$, and the argumentation from the previous section
shows that $i \mapsto v^i_{\phi(i)}$ is a homomorphism, concluding
the proof of Theorem~\ref{thm:dirmc-lb}.

Recall that $(i,j)$ is normal if and only if $(j,i)$ is normal.
Let $c$ be the number of cheated pairs $(i,j)$, $i < j$.
If $(i,j)$, $i < j$, is cheated, then there is a witness for it:
one of the paths $X^{i,j}$, $Y^{i,j}$, $X^{j,i}$, $Y^{j,i}$,
$Z^i$, or $Z^j$ is cheated, that is, contains more than one vertex of $X$.
However, note that a cheated path $X^{i,j}$, $Y^{i,j}$, $X^{j,i}$, or $Y^{j,i}$ is a witness only that $(i,j)$ and $(j,i)$ is cheated.
Let $c_{XY}$ be the number of cheated $x$- and $y$-paths.
Furthermore, a cheated path $Z^i$ is a witness that $(i,j)$ is cheated
for every $ij \in E(H)$: there are only $\deg_H(i)$ such pairs $(i,j)$.
We infer that
$$c \leq c_{XY} + \sum_{i \in V(H): Z^i\ \textrm{cheated}} \deg_H(i).$$
On the other hand, 
a cost of a second heavy vertex in $X$ on an $x$- or $y$-path is $M$,
while the cost of a second heavy vertex on $Z^i$ is $M \cdot \deg_H(i)$.
Furthermore, recall that if $(i,j)$, $i < j$, is normal, then $P^{i,j}$ contains at least one vertex of $X$.
Thus, the total weight of $X$ is at least
\begin{align*}
& 6kM + M\cdot c_{XY} + \sum_{i \in V(H): z^i\ \textrm{cheated}} M \cdot deg_H(i) + k - c \\
&\quad \geq 6kM + M\cdot c + k - c = p + (M-1)c.
\end{align*}
Consequently, if $M > 1$ and the weight of $X$ is at most $p$, we have $c=0$.
This finishes the proof of Theorem~\ref{thm:dirmc-lb}.

The resulting digraph can easily be shown to have directed pathwidth 2.
Clearly, it cannot have smaller directed pathwidth, as it contains two-vertex
cycles.
In the other direction, order the vertices of the resulting graphs as follows.
\begin{enumerate}
\item All source terminals.
\item The vertices $x^{i,j}_a$, sorted first by the pair $(i,j)$ lexicographically, and then by the subscript $a$.
\item The vertices $z^i_a$, sorted first by $i$, and then by the subscript $a$.
\item The vertices $p^{i,j}_{a,b}$, sorted first by the pair $(i,j)$ lexicographically, and then by the pair $(a,b)$ lexicographically.
\item The vertices $y^{i,j}_a$, sorted first by the pair $(i,j)$ lexicographically, and then by the subscript $a$.
\item All sink terminals.
\end{enumerate}
Observe now that we can 
construct a directed path decomposition of the resulting graph
by taking bags consisting of every two consecutive vertices in this order.

We infer that \dirmc{} is $W[1]$-hard already for integer-weighted instances,
parameterized by total solution weight, for instances with 4 terminals 
and directed pathwidth 2. This is in sharp contrast to the result
that it is FPT for DAGs~\cite{dags-alg}.

\section{Hardness for \stor}\label{sec:stor}
\begin{figure*}[tbh]
\begin{center}
\includegraphics{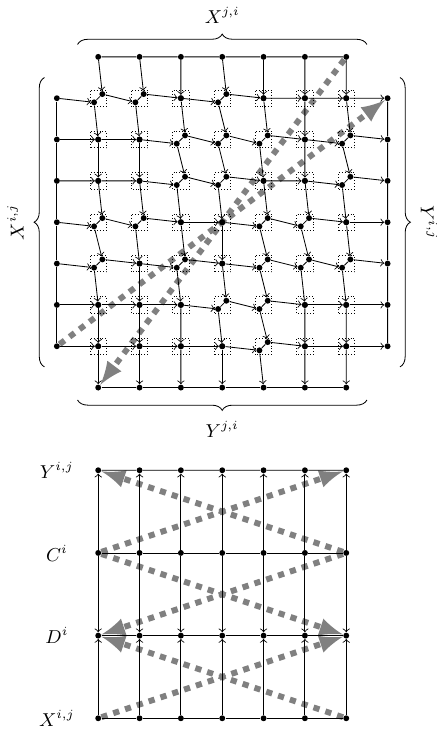}
\caption{Illustration of the reduction for \stor{}. Thick gray arrows represent terminal pairs.
The top figure illustates a $p$-grid; dashed rectangles represent single entities $p^{i,j}_{a,b}$.
The bottom figure illustrates synchronization between paths $X^{i,j}$, $Y^{i,j}$, $C^i$, and $D^i$.}
\label{fig:stor}
\end{center}
\end{figure*}

\paragraph{Construction.} 
Given an input \subiso{} instance $(G,H)$,
we construct an equivalent \stor{} instance $(G',\terms)$ as follows.

First, we introduce a number of undirected paths on $n$ vertices:
for every $i \in V(H)$ we introduce a path $C^i$ on vertices $c^i_1,c^i_2,\ldots,c^i_n$
and a path $D^i$ on vertices $d^i_1,d^i_2,\ldots,d^i_n$,
while for every ordered pair $(i,j)$ where $ij \in E(H)$, we introduce
a path $X^{i,j}$ on vertices $x^{i,j}_1,x^{i,j}_2,\ldots,x^{i,j}_n$
and a path $Y^{i,j}$ on vertices $y^{i,j}_1,y^{i,j}_2,\ldots,y^{i,j}_n$.
We connect these paths as follows:
for every $i \in V(H)$ and $1 \leq a \leq n$ we add an arc $(c^i_a,d^i_a)$,
while for every pair $(i,j)$ with $ij \in E(H)$ and for every $1 \leq a \leq n$ we add arcs
$(x^{i,j}_a,d^i_a)$ and $(c^i_a,y^{i,j}_a)$.
Furthermore, for every $i \in V(H)$ we add terminal pairs
$(c^i_1,d^i_n)$ and $(c^i_n,d^i_1)$
and for every pair $(i,j)$ where $ij \in E(H)$, we add terminal pairs
$(x^{i,j}_1,d^i_n)$, $(x^{i,j}_n,d^i_1)$, $(c^i_1,y^{i,j}_n)$, and $(c^i_n,y^{i,j}_1)$.

For an index $1 \leq a \leq n$, we say that the path $C^i$ is
\emph{oriented towards $c^i_a$} if for every $1 \leq b < n$, the edge $c^i_b c^i_{b+1}$ is oriented from $c^i_b$ to $c^i_{b+1}$ if $b < a$
and from $c^i_{b+1}$ to $c^i_b$ if $b \geq a$.
Similarly, we say that the path $D^i$ is \emph{oriented away from $d^i_a$} if for every $1 \leq b < n$, the edge $d^i_b d^i_{b+1}$ is oriented from $d^i_{b+1}$ to $d^i_{b}$ if $b < a$
and from $d^i_{b}$ to $d^i_{b+1}$ if $b \geq a$, and similar notions are defined for the paths $X^{i,j}$ and $Y^{i,j}$.

Informally speaking, the introduced paths play the role of $x-$, $y-$, and $z-$ paths
from the reduction for \dirmc{}.
In the currently constructed graph, the only way to satisfy both the terminal pair $(c^i_1,d^i_n)$ and the terminal pair $(c^i_n,d^i_1)$
is to chose a number $1 \leq \phi(i) \leq n$ and direct the path $C^i$ towards $c^i_{\phi(i)}$
and the path $D^i$ away from $d^i_{\phi(i)}$. 
The choice of $\phi(i)$ corresponds to the image of $i \in V(H)$
in the sought homomorphism.
The additional terminal pairs ensure that the choice is copied to other paths:
$X^{i,j}$ needs to be directed towards $x^{i,j}_{\phi(i)}$ and $Y^{i,j}$ needs to be directed away from $y^{i,j}_{\phi(i)}$.
In the remainder of the construction, we will not introduce any more edge nor arc incident to a vertex of any path $C^i$ or $D^i$,
and only arcs going out of the paths $X^{i,j}$ and arcs going towards the paths $Y^{i,j}$. In this way we will not introduce any unwanted
way of satisfying the terminal pairs introduced so far, guaranteeing the desired behavior.

Let us now proceed to the construction of the $p$-grids.
Similarly as in the case of \dirmc{},
for every pair $(i,j)$ with $ij \in E(H)$ and $i < j$ we introduce an acyclic $n \times n$ grid (called again a \emph{$p$-grid}) with vertices $p^{i,j}_{a,b}$ for $1 \leq a,b \leq n$
and arcs $(p^{i,j}_{a,b}, p^{i,j}_{a+1,b})$ for every $1 \leq a < n$ and $1 \leq b \leq n$, as well as 
$(p^{i,j}_{a,b}, p^{i,j}_{a,b+1})$ for every $1 \leq a \leq n$ and $1 \leq b < n$.
We connect the grid to the previously constructed
vertices as follows: for every $1 \leq a \leq n$ we introduce arcs
$$(x^{i,j}_a, p^{i,j}_{a,1}),\quad (p^{i,j}_{a,n}, y^{i,j}_{a}),\quad (x^{j,i}_a,p^{i,j}_{1,a}), \quad (p^{i,j}_{n,a}, y^{j,i}_{a});$$
please note the lack of shift by one as compared to the \dirmc{} construction.
Furthermore, we add terminal pairs $(x^{i,j}_n,y^{i,j}_1)$
and $(x^{j,i}_n,y^{j,i}_1)$.

Observe that, if the paths $X^{i,j}$ and $X^{j,i}$ were directed
towards $x^{i,j}_{\phi(i)}$ and $x^{j,i}_{\phi(j)}$ respectively,
while the paths $Y^{i,j}$ and $Y^{j,i}$ were directed away from
$y^{i,j}_{\phi(i)}$ and $y^{j,i}_{\phi(j)}$ respectively,
there are unique paths in the graph satisfying the newly introduced terminal pairs:
the one from $x^{i,j}_n$ to $y^{i,j}_1$ needs to traverse the grid
for the pair $(i,j)$ along the $\phi(i)$-th row, while
the one from $x^{j,i}_n$ to $x^{j,i}_1$ needs to traverse it
along the $\phi(j)$-th column. These two paths intersect
at the vertex $p^{i,j}_{\phi(i),\phi(j)}$;
we finish the construction by encoding
the edges of $G$ by the following modification of the vertices $p^{i,j}_{a,b}$.

For every vertex $p^{i,j}_{a,b}$, we call its incident four arcs as follows:
the \emph{north} arc goes from $p^{i,j}_{a-1,b}$ (or $x^{j,i}_b$ if $a=1$) to
$p^{i,j}_{a,b}$, the \emph{south} arc goes
from $p^{i,j}_{a,b}$ to $p^{i,j}_{a+1,b}$ (or $y^{j,i}_b$ if $a=n$),
the \emph{west} arc goes from $p^{i,j}_{a,b-1}$ (or $x^{i,j}_a$ if $b=1$)
to $p^{i,j}_{a,b}$, while the \emph{east} arc goes
from $p^{i,j}_{a,b}$ to $p^{i,j}_{a,b+1}$ (or $y^{i,j}_a$ if $b=n$).
If $v^i_a v^j_b \in E(G)$, we keep the vertex $p^{i,j}_{a,b}$
intact; otherwise, we split the vertex $p^{i,j}_{a,b}$
into two vertices $p^{i,j}_{a,b,SW}$, $p^{i,j}_{a,b,NE}$, connected by
an undirected edge, with $p^{i,j}_{a,b,SW}$
incident to the south and west arcs and $p^{i,j}_{a,b,NE}$ incident
to the north and east arcs. With this construction, we cannot traverse
the split vertex $p^{i,j}_{a,b}$
from north to south and from west to east at the same time, 
forbidding us from choosing $\phi(i) = a$ and $\phi(j) = b$
simultaneously.
This implies that $(G',\terms)$ is a positive instance if and only if 
$(G,H)$ is a positive instance.

This finishes the description of the constructed \stor{} instance
$(G',\terms)$. Note that the number of terminal pairs
is bounded by $\Oh(k + \ell) = \Oh(k)$. 
We now prove formally that the reduction
is a correct hardness reduction for \stor{}.

\paragraph{From a homomorphism to an orientation.}
Let $\phi:[k] \to [n]$ be such that $i \mapsto v^i_{\phi(i)}$ is a homomorphism
of $H$ into $G$.
We start by orienting some of the edges of $H$ as follows.
For every $i \in V(H)$, orient the path $C^i$ towards $c^i_{\phi(i)}$
and the path $D^i$ away from $d^i_{\phi(i)}$.
Similarly, for every pair $(i,j)$ with $ij \in E(H)$, orient
the path $X^{i,j}$ towards $x^{i,j}_{\phi(i)}$ and
orient the path $Y^{i,j}$ away from $y^{i,j}_{\phi(i)}$.

Let $G''$ be the mixed graph obtained so far.
For every terminal pair $(s,t) \in \terms$, we exhibit a path in $G''$
from $s$ to $t$ in such a way that no two such paths share an undirected
edge. This proves that the remaining undirected edges can be oriented
in the desired way.

It is straightforward to observe that for every terminal pair $(s,t)$
involving a $c$-vertex
or a $d$-vertex (i.e., one introduced in the first part of the construction)
there exists a directed path in $G''$ from $s$ to $t$ consisting of directed
arcs only: we may traverse from any endpoint of $X^{i,j}$ or $C^i$
up to $x^{i,j}_{\phi(i)}$
or $c^i_{\phi(i)}$, fall to $d^i_{\phi(i)}$ or $y^{i,j}_{\phi(i)}$,
and go along the path $D^i$ or $Y^{i,j}$ to any of its endpoints.
Thus, it remains to focus on the pairs $(x^{i,j}_n,y^{i,j}_1)$
and $(x^{j,i}_n,y^{j,i}_1)$ for $ij \in E(H)$, $i < j$.

For the first pair, we traverse $X^{i,j}$ up to $x^{i,j}_{\phi(i)}$,
then traverse along the $\phi(i)$-row of the grid
(i.e., using vertices $p^{i,j}_{\phi(i),b}$ for $1 \leq b \leq n$;
if any of these vertices is split, we traverse it from west to east
using the intermediate undirected edge), and traverse $Y^{i,j}$
from $y^{i,j}_{\phi(i)}$ up to $y^{i,j}_1$.
The path for the second pair is similar, but uses the $\phi(j)$-th
column of the grid, and traverses every vertex $p^{i,j}_{a,\phi(j)}$
for $1 \leq a \leq n$ from north to south.

These two paths intersect at $p^{i,j}_{\phi(i),\phi(j)}$.
Since $v^i_{\phi(i)} v^j_{\phi(j)} \in E(G)$, this vertex
is not split and the two aforementioned paths do not share an undirected edge.
As only these two paths are present in the grid for the pair $(i,j)$,
and all undirected edges in $G''$ are contained in such grids,
we conclude that $(G',\terms)$ is a yes-instance.

\paragraph{From an orientation to a homomorphism.}
Assume that $(G',\terms)$ is a yes-instance, and let $G''$ be an oriented graph $G'$ such that 
for every $(s,t) \in \terms$, there is a path from $s$ to $t$ in $G''$.

Fix a vertex $i \in V(H)$.
Recall that no arc leads toward any path $C^i$ or $X^{i,j}$, and no arc leads from any path $D^i$
or $Y^{i,j}$. Furthermore, all arcs leaving $C^i$ lead to $D^i$ or one of the paths $Y^{i,j}$,
and all arcs going into $D^i$ start in $C^i$ or in one of the paths $X^{i,j}$. We infer
that the path from $c^i_1$ to $d^i_n$ and the path from $c^i_n$ to $d^i_1$ in $G''$ both need to be completely
contained in $G''[C^i \cup D^i]$. Furthermore, such a path for $(c^i_1,d^i_n)$ traverses $C^i$ up to some vertex $c^i_a$,
falls to $d^i_a$, and continues along $D^i$, and similarly for the pair $(c^i_n,d^i_1)$. We infer that there exists
an index $1 \leq \phi(i) \leq n$ such that $C^i$ is oriented towards $c^i_{\phi(i)}$, while $D^i$
is oriented away from $d^i_{\phi(i)}$. 

We claim that $i \mapsto v^i_{\phi(i)}$ is a homomorphism of $H$ into $G$.
Fix an edge $ij \in E(H)$, $i < j$;
we aim to show that the vertex $p^{i,j}_{\phi(i),\phi(j)}$ is not split, which is equivalent to $v^i_{\phi(i)} v^j_{\phi(j)} \in E(G)$.

A similar reasoning as earlier for the path $X^{i,j}$ and terminal pairs 
$(x^{i,j}_1,d^i_n)$ and $(x^{i,j}_n,d^i_1)$ implies that $X^{i,j}$ is oriented towards $x^{i,j}_{\phi(i)}$.
Analogously, we obtain that $X^{j,i}$ is oriented towards $x^{j,i}_{\phi(j)}$,
$Y^{i,j}$ is oriented away from $y^{i,j}_{\phi(i)}$, and $Y^{j,i}$ is oriented away from $y^{j,i}_{\phi(j)}$.

The only arcs leaving the path $X^{i,j}$ are arcs going towards $D^i$ and the $p$-grid for the pair $(i,j)$.
Similarly, the only arcs ending in $Y^{i,j}$ start in $C^i$ and in the aforementioned $p$-grid, and analogous claims hold for the paths $X^{j,i}$ and $Y^{j,i}$.
We infer that the paths in $H'$ for terminal pairs $(x^{i,j}_n,y^{i,j}_1)$ and $(x^{j,i}_n,y^{j,i}_1)$ need to traverse through the $p$-grid for the pair $(i,j)$.

With the paths $X^{i,j}$, $X^{j,i}$, $Y^{i,j}$, and $Y^{j,i}$ oriented as described, even with keeping the undirected edges in the grid not oriented,
the only path for the first terminal pair traverses the $\phi(i)$-th row of the grid from $x^{i,j}_{\phi(i)}$, through
$p^{i,j}_{\phi(i),b}$ for $1 \leq b \leq n$ (split or not), towards $y^{i,j}_{\phi(i)}$, and similarly
the only path for the second terminal pair traverses the $\phi(j)$-th column of the grid. If
the vertex $p^{i,j}_{\phi(i),\phi(j)}$ is split, these paths traverse the undirected edge $p^{i,j}_{\phi(i),\phi(j),SW} p^{i,j}_{\phi(i),\phi(j),NE}$
in opposite directions, a contradiction.

This concludes the proof of Theorem~\ref{thm:stor-lb}.

\section{Directed treewidth bound of $k$-cut-minimal graphs}\label{sec:dtw-bound}
We now prove the directed treewidth upper bound
(Theorem~\ref{thm:dtw-bound}).
For the duration of this section, we will consider arc cuts for convenience,
but the result also implies a very similar statement for vertex cuts.

We will need some further
preliminaries.
For a vertex set $U$, we let $\delta(U)$ denote the set of arcs leaving $U$.
Let $G=(V,A)$ be a digraph. A set $T \subseteq V$ is \emph{well-linked} if
for every pair of equal sized subsets $X$ and $Y$ of $T$, there are $|X|$
vertex disjoint paths from $X$ to $Y$ in $G-(T \setminus (X \cup Y))$. 
Well-linked sets are connected to directed treewidth in the following sense.

\begin{theorem}[Cor.~6.4.24 of~\cite{kreutzer-width-chapter}] \label{thm:well-linked}
  Let $G$ be a digraph with no well-linked set of cardinality more than $k$.
  Then $G$ has directed treewidth $\Oh(k)$.
\end{theorem}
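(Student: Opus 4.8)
The plan is to prove the contrapositive: if $G$ has directed treewidth larger than $ck$ for a suitable absolute constant $c$, then $G$ contains a well-linked set of cardinality more than $k$. I would route this through two classical obstructions, \emph{havens} and \emph{brambles}. A \emph{haven of order $m$} in $G$ is a function $\beta$ assigning to every vertex set $Z$ with $|Z|<m$ the vertex set of a strongly connected component of $G-Z$, subject to the monotonicity requirement that $Z'\subseteq Z$ implies $\beta(Z)\subseteq\beta(Z')$; a \emph{bramble of order $r$} is a family $\mathcal{B}$ of strongly connected subgraphs of $G$ that pairwise \emph{touch} (share a vertex, or have an arc in each direction between them) and such that no set of fewer than $r$ vertices meets all members of $\mathcal{B}$. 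The first ingredient is the classical theorem of Johnson, Robertson, Seymour and Thomas, as presented in~\cite{kreutzer-width-chapter}: if $G$ has directed treewidth at least $3m$, then $G$ has a haven of order $m$ (the precise constant is immaterial). So it suffices to convert a haven of order $m$ into a well-linked set of size $\Omega(m)$, which I would do in two short steps.

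Step~(i): a haven $\beta$ of order $m$ yields a bramble of order at least $m/2$. Take $\mathcal{B}=\{\,G[\beta(Z)] : Z\subseteq V(G),\ |Z|<m/2\,\}$; each member is strongly connected, being an induced subgraph of $G-Z$ on a strongly connected component. Any two members touch, because for $|Z_1|,|Z_2|<m/2$ we have $|Z_1\cup Z_2|<m$, so monotonicity gives $\emptyset\neq\beta(Z_1\cup Z_2)\subseteq\beta(Z_1)\cap\beta(Z_2)$. And if a set $S$ with $|S|<m/2$ met every member, it would in particular meet $\beta(S)$, contradicting $\beta(S)\subseteq V(G)\setminus S$; hence every hitting set of $\mathcal{B}$ has size at least $m/2$.

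Step~(ii): a minimum hitting set $T$ of a bramble $\mathcal{B}$ of order $r$ is a well-linked set (necessarily of size $r$). This is the technical heart. Given $X,Y\subseteq T$ with $|X|=|Y|=s$, first reduce to $X\cap Y=\emptyset$ (a vertex of $X\cap Y$ supplies a length-zero path, and deleting it from both $X$ and $Y$ exactly recovers the well-linkedness condition for the smaller pair). Suppose $G':=G-(T\setminus(X\cup Y))$ has fewer than $s$ vertex-disjoint paths from $X$ to $Y$. By Menger's theorem there is a set $S$ with $|S|\le s-1$ hitting all such paths; let $L$ be the set of vertices of $G'-S$ reachable from $X\setminus S$, and $R$ the remaining vertices, so $X\setminus S\subseteq L$, $Y\setminus S\subseteq R$, and $G'-S$ has no arc from $L$ to $R$. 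Any member of $\mathcal{B}$ disjoint from $S\cup(T\setminus(X\cup Y))$ (a \emph{survivor}) lies, by strong connectivity, entirely in $L$ or entirely in $R$; since survivors pairwise touch, either all survivors lie in $L$ or all lie in $R$, say all in $L$. A survivor meets $T$ (as $T$ is a hitting set) but is disjoint from $S\cup(T\setminus(X\cup Y))$, hence meets $X\cup Y$, and being contained in $L$ it avoids $Y\setminus S\subseteq R$, so it meets $X$. Therefore $S\cup(T\setminus(X\cup Y))\cup X$ is a hitting set of $\mathcal{B}$ of size at most $(s-1)+(|T|-2s)+s=|T|-1<|T|=r$, a contradiction; the case ``all survivors in $R$'' is symmetric, using $Y$ in place of $X$.

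Chaining the three steps: directed treewidth at least $3m$ gives a haven of order $m$, hence a bramble of order at least $m/2$, hence a well-linked set of size at least $m/2$. Contrapositively, if $G$ has no well-linked set of cardinality more than $k$, then it has no haven of order more than $2k$, so by the Johnson--Robertson--Seymour--Thomas theorem its directed treewidth is $\Oh(k)$, which is the statement. The step I expect to require the most care is Step~(ii) --- synchronizing the Menger separator, the $L$/$R$ split, and the hitting-set count so that the contradiction comes out --- whereas Step~(i) is routine and the haven-vs-treewidth duality of Step~(0) is the deepest ingredient but is quoted from the literature.
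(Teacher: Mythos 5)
The paper does not actually prove this statement --- it is imported verbatim as Cor.~6.4.24 of Kreutzer and Ordyniak~\cite{kreutzer-width-chapter}, and the definition of directed treewidth is deliberately never even used in the paper. Your proposal therefore cannot match ``the paper's proof''; what it does is reconstruct, correctly as far as I can check, the standard duality argument that underlies the cited result. Step~(0), the haven-versus-directed-treewidth duality of Johnson--Robertson--Seymour--Thomas, is legitimately quoted (it is the deep ingredient, and the constant is indeed immaterial). Step~(i) is fine: the sets $\beta(Z)$ for $|Z|<m/2$ are strongly connected, pairwise intersect via $\beta(Z_1\cup Z_2)$, and no set of fewer than $m/2$ vertices can hit $G[\beta(S)]$ since $\beta(S)\cap S=\emptyset$. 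Step~(ii) is the delicate part and your accounting works: the reduction to $X\cap Y=\emptyset$ is sound because the deleted set for the smaller pair is exactly $(T\setminus(X\cup Y))\cup\{v\}$, so the trivial path at $v$ can be added disjointly; in the Menger argument, a survivor avoids $S\cup(T\setminus(X\cup Y))$, so any arc of $G$ between two survivors is present in $G'-S$, which justifies that survivors on opposite sides of the $L/R$ split cannot touch (under the standard ``arcs in both directions'' notion you state); and the hitting set $S\cup(T\setminus(X\cup Y))\cup X$ (or with $Y$) has size at most $|T|-1$, contradicting minimality of $T$. Chaining the steps gives: no well-linked set of size more than $k$ implies no bramble of order more than $k$, hence no haven of order roughly $2k$, hence directed treewidth $\Oh(k)$ by the quoted duality. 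In short, your route is a valid self-contained derivation (modulo the quoted JRST theorem) of a fact the paper simply cites; the paper's choice buys brevity, yours buys transparency about where the $\Oh(k)$ bound comes from.
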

In what follows we will work only with well-linked sets, and thus we omit the definition of directed treewidth.
For the definition of directed treewidth, and 
more on directed width notions, see Kreutzer and Ordyniak~\cite{kreutzer-width-chapter}.

We will show that a $k$-cut-minimal graph cannot contain an arbitrarily large 
well-linked set. To illustrate the approach, assume that $T \subseteq V$
is a well-linked set of sufficiently large size 
($|T| > f(k)$ for some $f(k)$ yet to be specified). 
We will identify a set $U \subseteq V$ with $|\delta(U)|\leq k$, 
such that $|T \cap U|, |T \setminus U| > k$. This contradicts
the assumption that $T$ is well-linked, and implies that $G$
has bounded directed treewidth.

The basic engine of the proof is the \emph{anti-isolation lemma}, 
due to Marx. 
For this, we need to
recall the notion of important separators. (We use the version for directed 
graphs~\cite{dmwc-alg}, adapted to our purpose; in particular, we use arc 
separators instead of vertex separators. The original definition 
for undirected graphs was~\cite{marx-impsep}.)

Let $G=(V,A)$ be a digraph with $s, t \in V$. An \emph{important st-separator}
is a minimal $s$-$t$-cut $C \subseteq A$ such that there is no $s$-$t$-cut 
$C' \subseteq A$, $C' \neq C$, with $|C'| \leq |C|$ such that 
every vertex reachable from $s$ in $G-C$ is also reachable from $s$ in $G-C'$. 

\begin{lemma}[\cite{dmwc-alg}]
  There are at most $4^k$ important $s$-$t$-separators of size at most $k$.
\end{lemma}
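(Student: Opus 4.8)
The plan is to prove the sharper statement that the number of important $s$-$t$-separators of size at most $k$ is at most $2^{2k-\lambda}$, where $\lambda=\lambda_G(s,t)$ is the size of a minimum $s$-$t$-cut (and the count is $0$, hence the bound trivial, if $\lambda>k$); since $2^{2k-\lambda}\le 2^{2k}=4^k$, this implies the lemma. The proof goes by induction on the potential $2k-\lambda$. Throughout I would use the standard correspondence between an important separator $C$ and its \emph{reachable set} $R_G(s,C)$, the set of vertices reachable from $s$ in $G-C$: since $C$ is an inclusion-minimal cut we have $C=\delta^+(R_G(s,C))$, so the map $C\mapsto R_G(s,C)$ is injective, and the submodular function $A\mapsto|\delta^+(A)|$ on vertex sets is the main tool.

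The first step is an uncrossing lemma: among all $s$-$t$-cuts of size $\lambda$ there is a unique one $C^*$ whose reachable set $R^*:=R_G(s,C^*)$ is inclusion-maximal, and moreover $R^*\subseteq R_G(s,C)$ for \emph{every} important separator $C$. This follows by applying submodularity to $R^*$ and $R:=R_G(s,C)$: both $\delta^+(R^*\cup R)$ and $\delta^+(R^*\cap R)$ are $s$-$t$-cuts, the latter has size at least $\lambda=|\delta^+(R^*)|$ by minimality of $\lambda$, hence $|\delta^+(R^*\cup R)|\le|\delta^+(R)|$; if $R^*\not\subseteq R$ this contradicts either the maximality of $R^*$ (when $|\delta^+(R)|=\lambda$) or, after a short argument, the non-dominated property defining importance of $C$. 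A companion fact I would record here is that the reachable sets of all minimum $s$-$t$-cuts form a lattice with maximum $R^*$, so in particular every minimum cut puts every vertex of $V\setminus R^*$ on the $t$-side.

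Now fix any arc $e=(u,v)\in C^*$, so $u\in R^*$ and $v\notin R^*$, and split the important separators $C$ of size at most $k$ into two groups. If $e\in C$, then $C\setminus\{e\}$ is an inclusion-minimal — and, one checks, important — $s$-$t$-separator of size at most $k-1$ in $G-e$, and this map is injective; since $\lambda_{G-e}(s,t)\ge\lambda-1$, induction bounds this group by $2^{2(k-1)-(\lambda-1)}=2^{2k-\lambda-1}$. If $e\notin C$, then from $u\in R^*\subseteq R_G(s,C)$ and $e\notin C$ we get $v\in R_G(s,C)$, so $C$ is still an important $s$-$t$-separator, with the same size and the same reachable set, in the graph $G'$ obtained from $G$ by adding the arc $(s,v)$; since every minimum $s$-$t$-cut of $G$ places $v$ on the $t$-side, no such cut survives in $G'$, so $\lambda_{G'}(s,t)\ge\lambda+1$ and induction bounds this group by $2^{2k-(\lambda+1)}=2^{2k-\lambda-1}$. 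Summing the two groups gives $2^{2k-\lambda}$, and the base case $k=0$ (or $\lambda>k$) is immediate, completing the induction.

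The main obstacle I anticipate is verifying that the two graph modifications really preserve membership in the set of important separators, i.e.\ that $C\setminus\{e\}$ cannot be dominated in $G-e$ and that $C$ cannot be dominated in $G'=G+(s,v)$. In both cases one argues by contradiction: a dominating cut in the modified graph lifts back to $G$ — in the first case by adding $e$ to it, in the second case by observing that adding the arc $(s,v)$ does not enlarge the reachable set of $C$ because $v$ was already reachable there — yielding a cut in $G$ of no larger size that dominates $C$ and hence contradicting the importance of $C$ in $G$. The reachable-set bookkeeping and the inclusion-minimality checks here are routine but are exactly where care is needed; everything else (submodularity, injectivity of the two maps, the lattice structure of minimum cuts, and the arithmetic $2^{2k-\lambda}\le 4^k$) is straightforward.
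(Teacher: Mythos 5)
The paper does not actually prove this lemma: it is imported by citation from Chitnis, Hajiaghayi and Marx \cite{dmwc-alg} (the $4^k$ bound for important separators goes back to the machinery of Marx \cite{marx-impsep} and Chen, Liu and Lu \cite{ChenLL09}), so there is no internal proof to compare yours against. What you propose is essentially the standard proof of the cited result: induction on the potential $2k-\lambda$, the unique minimum cut $C^*$ with inclusion-maximal reachable set $R^*$ obtained by submodular uncrossing, the containment $R^*\subseteq R_G(s,C)$ for every important separator $C$, and the two-way branching on an arc $e=(u,v)\in C^*$ (delete $e$ in one branch, make $v$ reachable from $s$ in the other), yielding $2^{2k-\lambda}\le 4^k$. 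This is correct in outline. The one place where your sketch is too quick is the second branch: to conclude $\lambda_{G'}(s,t)\ge\lambda+1$ you need that $v$ actually \emph{reaches} $t$ in $G-C'$ for every minimum cut $C'$ of $G$, so that $C'$ ceases to be a cut once the arc $(s,v)$ is present; the lattice fact you invoke only gives $v\notin R_G(s,C')$, which is weaker, and for an arbitrary vertex of $V\setminus R^*$ the stronger statement can fail (e.g.\ for vertices with no path to $t$). For your particular $v$ it does hold: by inclusion-minimality of $C^*$, the vertex $v$ reaches $t$ in $G-C^*$ along a path all of whose arcs have tails outside $R^*$, and since $R_G(s,C')\subseteq R^*$ and $C'=\delta(R_G(s,C'))$, that path avoids $C'$ as well. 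With this two-line patch, together with the routine lifting checks you already flag for preservation of importance under the two graph modifications (which go through by the same kind of reachable-set argument), your induction is sound and matches the argument in the literature that the paper relies on.
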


It is important to note that every $s$-$t$-cut $C$ can be ``pushed'' to an
important $s$-$t$-separator $C'$ with $|C'| \leq |C|$, such that every vertex
reachable from $s$ in $G-C$ is also reachable from $s$ in $G-C'$.

We state and prove the anti-isolation lemma. (This is taken from a
set of lecture slides of Marx~\cite{anti-isolation}; to the best of our
knowledge, no proof has appeared in a formally reviewed publication.)
Because the expression will be used several times, define 
$g(k)=(k+1)4^{k+1}$. 

\begin{lemma}[Anti-isolation~\cite{anti-isolation}] \label{lm:anti-isolation}
  Let $s, v_1, \ldots, v_r$ be vertices in a digraph $G=(V,A)$, and let
  $C_1, \ldots, C_r \subseteq A$ be arc sets of size at most $k$
  such that for all $i, j \in [r]$ there is a path from $s$ to $v_j$
  in $G-C_i$ if and only if $i=j$. Then $r \leq g(k)$.
\end{lemma}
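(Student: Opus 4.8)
The plan is to argue by contradiction: assume $r > g(k) = (k+1)4^{k+1}$ and derive a contradiction via a pigeonhole argument on important separators. The key observation is that for each $i \in [r]$, the cut $C_i$ separates $s$ from $v_i$ but leaves all other $v_j$ reachable from $s$. We want to ``push'' each $C_i$ to a canonical object — an important $s$-$v_i$-separator — and then use the bound $4^k$ on the number of important separators of size $\le k$ (plus a bit of slack for the sizes) to find two indices whose pushed cuts coincide, contradicting the separation requirement.

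First I would fix $i$ and consider $C_i$ as an $s$-$v_i$-cut. As noted in the excerpt, $C_i$ can be pushed to an important $s$-$v_i$-separator $C_i'$ with $|C_i'| \le |C_i| \le k$ such that every vertex reachable from $s$ in $G - C_i$ remains reachable from $s$ in $G - C_i'$. In particular, for every $j \ne i$, since $v_j$ is reachable from $s$ in $G - C_i$, it is also reachable from $s$ in $G - C_i'$; and $v_i$ is still not reachable from $s$ in $G - C_i'$ because $C_i'$ is an $s$-$v_i$-cut. So the family $C_1', \ldots, C_r'$ has exactly the same ``isolation pattern'' as $C_1, \ldots, C_r$: there is a path from $s$ to $v_j$ in $G - C_i'$ iff $i \ne j$. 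The point of the push is that each $C_i'$ now lies in a small universe. The subtlety: an important $s$-$v_i$-separator depends on the target $v_i$, so a priori the $C_i'$ for different $i$ live in different families; I need to collapse this. The trick is to note that $C_i'$ is simultaneously an important $s$-$v_i$-separator, but we can also classify it by the size $|C_i'| \in \{0, 1, \ldots, k\}$ and then — this is the heart of the argument — observe that if two indices $i \ne i'$ had $C_i' = C_{i'}'$ as arc sets, we would get an immediate contradiction: $C_i'$ would have to both disconnect $s$ from $v_i$ (since it separates that pair) and leave $s$-$v_i$ connected (since $i' \ne i$ means $v_i$ is reachable from $s$ in $G - C_{i'}' = G - C_i'$). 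Hence the map $i \mapsto C_i'$ is injective.

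It remains to bound the number of distinct arc sets that can arise as such a $C_i'$. Here is where I would be careful, and where I expect the main obstacle to lie: one cannot simply say ``there are at most $4^k$ important separators'' because the source-target pair varies with $i$. The clean way around this is to add a fresh sink vertex: introduce a new vertex $t^\ast$ and, for the purpose of counting, note that each $C_i'$ is an inclusion-minimal $s$-$v_i$-cut of size $\le k$; by minimality every arc of $C_i'$ lies on some $s$-to-$v_i$ path avoiding the rest of $C_i'$, so $v_i$ is reachable in $G - C_i'$ from the head of each such arc — meaning $v_i$ is ``just past'' the cut. I would instead argue more directly: group the indices by the value of $v_i$. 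There are at most $r$ distinct targets, but that is circular. The correct route (and the one Marx's slides use) is: for a fixed target vertex $v$, there are at most $4^k$ important $s$-$v$-separators of size $\le k$, but crucially we only ever use each target $v_i$ once — so instead, reverse the roles. Consider the ``reachable set'' $R_i$ of vertices reachable from $s$ in $G - C_i'$. Since $C_i'$ is important, $R_i$ is determined by $C_i' = \delta(R_i)$, and $\{v_j : j \ne i\} \subseteq R_i$ while $v_i \notin R_i$. Now I would apply the important-separator count to the single pair $(s, t^\ast)$ where $t^\ast$ is added with an arc from every $v_i$: then each $C_i'$ together with the unique arc $(v_i, t^\ast)$ becomes an $s$-$t^\ast$-cut of size $\le k+1$, and pushing it gives an important $s$-$t^\ast$-separator of size $\le k+1$, of which there are at most $4^{k+1}$; summing over the $k+1$ possible sizes and checking that distinct $i$ still give distinct objects yields $r \le (k+1) 4^{k+1} = g(k)$, the contradiction. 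The delicate verification is that the injectivity survives this second push — that adding the arc $(v_i, t^\ast)$ and re-pushing does not conflate two indices — which follows because $v_i$ is the unique $v$-vertex reachable along the new arc only in its own instance; this bookkeeping is the part I expect to need the most care.
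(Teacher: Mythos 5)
You have the hypothesis backwards, and this is fatal to the argument as written. The lemma states that $v_j$ is reachable from $s$ in $G-C_i$ if and only if $i=j$: each $C_i$ \emph{keeps} $v_i$ reachable and cuts $s$ off from all the other $v_j$. You assume the opposite pattern (reachable iff $i\neq j$), and under that reading the statement is simply false: take $s$ with arcs $(s,v_1),\ldots,(s,v_r)$ and $C_i=\{(s,v_i)\}$; then $k=1$ while $r$ is unbounded. Concretely, the misreading breaks your proposal at its pivotal step: after adding $t^\ast$ with an arc $(v_j,t^\ast)$ for \emph{every} $j$, the set $C_i'\cup\{(v_i,t^\ast)\}$ is an $s$-$t^\ast$-cut only because all $v_j$ with $j\neq i$ are unreachable from $s$ in $G-C_i$ --- which is exactly the correct hypothesis and the opposite of the ``isolation pattern'' you set up; under your reading, $t^\ast$ stays reachable through any $v_j$ with $j\neq i$ and no such small cut exists. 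Likewise, your first phase (pushing each $C_i$ to an important $s$-$v_i$-separator) presupposes that $C_i$ is an $s$-$v_i$-cut, which it is not under the actual hypothesis; that phase is both unjustified and unnecessary --- the paper pushes $C_i\cup\{(v_i,t^\ast)\}$ directly in the augmented graph.

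The counting at the end also needs repair. The injectivity of $i\mapsto$ (pushed important $s$-$t^\ast$-separator) that you flag as ``the delicate verification'' need not hold and is not needed: a single important separator may contain several arcs $(v_j,t^\ast)$ and hence arise as the push for several indices, with no contradiction, since that separator is not one of the original $C_j$'s. The correct accounting does not involve ``summing over the $k+1$ possible sizes'' either (the $4^{k+1}$ bound is already for size \emph{at most} $k+1$): there are at most $4^{k+1}$ important $s$-$t^\ast$-separators of size at most $k+1$, each contains at most $k+1$ arcs of the form $(v_j,t^\ast)$, and every arc $(v_i,t^\ast)$ lies in at least one of them, namely the push of $C_i\cup\{(v_i,t^\ast)\}$, which must contain $(v_i,t^\ast)$ because $v_i$ stays reachable after pushing. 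This yields $r\le(k+1)4^{k+1}=g(k)$. So the second half of your proposal contains the skeleton of the paper's proof, but as written it rests on a reversed hypothesis and a flawed count, and would need to be rewritten accordingly.
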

\begin{proof}
  Add a vertex $t$, and an arc $(v_i,t)$ for every $i \in [r]$.
  Then for every $i \in [r]$, the cut $C_i \cup \{v_i,t\}$ is an $s$-$t$-cut
  of size at most $k+1$, and can be pushed to an important separator $C_i'$
  of size at most $k+1$. Note that necessarily $(v_i,t) \in C_i'$. 
  Since there are only at most $4^{k+1}$ important separators of size
  at most $k+1$, and each of them contains at most $k+1$ arcs $(v_j,t)$,
  $j \in [r]$, and since every such arc is contained in an important separator,
  we conclude that $r \leq (k+1)4^{k+1}=g(k)$. 
\end{proof}

We will also use the following dual form.

\begin{lemma} \label{lm:reverse-anti}
  Let $t, v_1, \ldots, v_r$ be vertices in a digraph $G=(V,A)$, and let
  $C_1, \ldots, C_r \subseteq A$ be arc sets of size at most $k$
  such that for all $i, j \in [r]$ there is a path from $v_j$ to $t$
  in $G-C_i$ if and only if $i=j$. Then $r \leq g(k)$.
\end{lemma}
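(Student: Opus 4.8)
The plan is to reduce Lemma~\ref{lm:reverse-anti} directly to Lemma~\ref{lm:anti-isolation} by the standard trick of reversing all arcs. Concretely, let $G^{\mathrm{rev}}=(V,A^{\mathrm{rev}})$ be the digraph obtained from $G$ by replacing every arc $(u,v)\in A$ with $(v,u)$, and for each $i\in[r]$ let $C_i^{\mathrm{rev}}\subseteq A^{\mathrm{rev}}$ be the corresponding set of reversed arcs; clearly $|C_i^{\mathrm{rev}}|=|C_i|\le k$. The key observation is that there is a path from $u$ to $w$ in $G-C_i$ if and only if there is a path from $w$ to $u$ in $G^{\mathrm{rev}}-C_i^{\mathrm{rev}}$, since reversing all arcs of a digraph and of the deleted set reverses exactly the directions of all walks. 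Hence the hypothesis of Lemma~\ref{lm:reverse-anti}---that for all $i,j\in[r]$ there is a path from $v_j$ to $t$ in $G-C_i$ iff $i=j$---translates verbatim into: for all $i,j\in[r]$ there is a path from $t$ to $v_j$ in $G^{\mathrm{rev}}-C_i^{\mathrm{rev}}$ iff $i=j$. That is precisely the hypothesis of Lemma~\ref{lm:anti-isolation} applied with source vertex $t$, sink-side vertices $v_1,\dots,v_r$, and cuts $C_1^{\mathrm{rev}},\dots,C_r^{\mathrm{rev}}$ in the digraph $G^{\mathrm{rev}}$.

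Applying Lemma~\ref{lm:anti-isolation} to this instance immediately yields $r\le g(k)=(k+1)4^{k+1}$, which is exactly the desired bound. So the proof is essentially a one-line invocation of the primal statement after arc reversal, together with the trivial remark that path existence is preserved under simultaneous reversal.

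I do not expect a genuine obstacle here; the only thing to be slightly careful about is making sure the bookkeeping is right---namely that the family of cut sets, their sizes, and the ``iff $i=j$'' reachability pattern all survive the reversal without any off-by-one or role-swapping error (the roles of $s$ and $t$ swap, and ``reachable from'' becomes ``can reach''). One could alternatively reprove it from scratch by mimicking the proof of Lemma~\ref{lm:anti-isolation}: add a new source vertex $s$ with an arc $(s,v_i)$ for every $i$, observe that $C_i\cup\{(s,v_i)\}$ is an $s$-$t$-cut of size at most $k+1$, push each to an important $s$-$t$-separator containing the arc $(s,v_i)$, and count via the $4^{k+1}$ bound on important separators and the $k+1$ bound on how many such arcs a single separator can contain. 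But the arc-reversal reduction is cleaner and avoids repeating the argument, so that is the route I would take.
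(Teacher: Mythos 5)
Your proposal is correct and matches the paper's proof, which also handles Lemma~\ref{lm:reverse-anti} by applying Lemma~\ref{lm:anti-isolation} to the reversed digraph $G'=(V,\{(v,u):(u,v)\in A\})$. The extra bookkeeping you spell out (sizes of the reversed cut sets and the preserved ``iff $i=j$'' reachability pattern) is exactly what the paper leaves implicit.
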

\begin{proof}
  Simply apply the anti-isolation lemma to the reversed graph $G'=(V, \{(v,u): (u,v) \in A\})$.
\end{proof}

Let us now recall the sunflower lemma. A \emph{sunflower} 
is a collection $\{X_1,\ldots, X_r\}$ of subsets of a ground set $V$
such that all pairwise intersections are identical, i.e.,
there is a set $K \subseteq V$ (the \emph{core} of the sunflower)
such that for all $i, j \in [r]$, $i \neq j$ we have
$X_i \cap X_j = K$. 
The (pairwise disjoint) sets $X_i \setminus K$ for $i \in [r]$ are called the \emph{petals} of the sunflower.
The famous Sunflower Lemma says the following
(original lemma is due to Erd\H{o}s and Rado~\cite{sunflower-original};
the following is from Flum and Grohe~\cite{flum-grohe}).

\begin{lemma} \label{lm:sunflower} 
  Let $\mathcal{H} \subseteq 2^V$ be a collection of subsets of size $d$
  of a ground set $V$. If $|\mathcal{H}|>d!k^d$, then $\mathcal{H}$
  contains a sunflower of cardinality more than $k$.
\end{lemma}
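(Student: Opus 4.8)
The plan is to prove the statement by induction on the set size $d$. For the base case $d=1$, the members of $\mathcal{H}$ are distinct singletons, hence pairwise disjoint, so any subcollection of them is automatically a sunflower with empty core; since $|\mathcal{H}| > 1!\cdot k^1 = k$, there are more than $k$ of them, which settles the case.

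For the inductive step with $d \geq 2$, I would first pick a maximal subcollection $\mathcal{D} = \{D_1, \ldots, D_m\} \subseteq \mathcal{H}$ of pairwise disjoint sets. If $m > k$, then $\mathcal{D}$ itself is a sunflower with empty core of cardinality more than $k$, and we are done. Otherwise $m \leq k$; set $Y = \bigcup_{i=1}^m D_i$, so that $|Y| \leq dm \leq dk$. By the maximality of $\mathcal{D}$, every member of $\mathcal{H}$ intersects $Y$, so by the pigeonhole principle some element $y \in Y$ lies in at least $|\mathcal{H}|/|Y| > d!k^d/(dk) = (d-1)!\,k^{d-1}$ members of $\mathcal{H}$.

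Now I would pass to the ``link'' of $y$: let $\mathcal{H}' = \{H \setminus \{y\} : H \in \mathcal{H},\ y \in H\}$. Since every $H$ in this family contains $y$, the map $H \mapsto H \setminus \{y\}$ is injective, so $\mathcal{H}'$ is a family of $(d-1)$-element sets with $|\mathcal{H}'| > (d-1)!\,k^{d-1}$. By the induction hypothesis, $\mathcal{H}'$ contains a sunflower $\{X_1, \ldots, X_r\}$ with core $K$ and $r > k$. Then $\{X_1 \cup \{y\}, \ldots, X_r \cup \{y\}\}$ is a subfamily of $\mathcal{H}$ forming a sunflower with core $K \cup \{y\}$ of cardinality $r > k$, which completes the induction.

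The only point requiring any care is the arithmetic in the pigeonhole step: one needs the bound $|Y| \leq dk$ precisely so that $d!k^d/(dk) = (d-1)!\,k^{d-1}$ matches exactly the threshold needed to apply the induction hypothesis at level $d-1$ — which is the reason the hypothesis of the lemma is stated as $|\mathcal{H}| > d!k^d$ rather than with a smaller bound. Beyond this bookkeeping there is no real obstacle.
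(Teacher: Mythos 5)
Your proof is correct: it is the standard Erd\H{o}s--Rado induction on $d$ (take a maximal pairwise disjoint subfamily, pigeonhole on its union, pass to the link of a popular element, lift the sunflower back by re-adding that element), which is precisely the argument behind the cited result --- the paper itself states Lemma~\ref{lm:sunflower} as a quoted fact from Flum and Grohe and gives no proof. The bookkeeping is right, including the key pigeonhole estimate $|\mathcal{H}|/|Y| > d!k^d/(dk) = (d-1)!\,k^{d-1}$ needed to invoke the induction hypothesis.
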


Observe that for any minimal $st$-cut $C$
containing an arc $(u,v)$, the graph $G-C$ contains both an $su$-path and a $vt$-path.
Using the Erd\H{o}s-Rado sunflower lemma, 
we can then sharpen these statements into the following.

\begin{lemma} \label{lemma:balancedcuts}
  Let $G=(V,A)$ be a digraph, with two distinguished vertices $s, t \in V$. 
  Let $F \subseteq A$ be a collection of arcs such that 
  every arc $(u,v) \in F$ participates in some minimal $st$-cut
  of size at most $k$. 
  %If there is no minimal $st$-cut $(S, V \setminus S)$ 
  %of size at most $k$ such that both $S$ and $V \setminus S$ contains
  %both endpoints of more than $k$ arcs from $F$, 
  %then $|F|$ is at most $2^{\Oh(k^2)}$. 
  If $|F|>h(k)$ where $h(k)=2^{\Oh(k^2)}$, then there is a minimal
  $st$-cut $C \subseteq A$ in $G$ of size at most $k$ that splits $F$, i.e., 
  there are more than $k$ arcs $(u,v) \in F \setminus C$ such that $G-C$ contains an $su$-path,
  and more than $k$ arcs $(u,v) \in F \setminus C$ such that $G-C$ contains a $vt$-path.
\end{lemma}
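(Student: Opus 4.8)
The plan is to iterate the anti-isolation machinery twice: first to find many arcs of $F$ that can be ``$s$-side isolated'' by a common structure, and then, among those, to find many that are simultaneously ``$t$-side reachable'' after applying a single cut. Concretely, I would proceed as follows.

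First, for each arc $(u,v) \in F$ fix a minimal $st$-cut $C_{(u,v)}$ of size at most $k$ witnessing membership in $F$, and push each $C_{(u,v)}$ to an important separator; by the bound of $4^k$ important separators of size at most $k$, some important separator $C^\star$ is the target of more than $|F|/4^k$ of these pushes. Replacing $F$ by this large subset, I may assume every $(u,v) \in F$ has a minimal cut that pushes to the \emph{same} important separator $C^\star$; in particular, $G - C^\star$ has an $su$-path for \emph{every} $(u,v) \in F$ (since pushing preserves reachability from $s$, and minimality of $C_{(u,v)}$ gives an $su$-path in $G - C_{(u,v)}$, hence in $G - C^\star$ — this needs a small argument, see the obstacle below). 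So a single cut $C^\star$ already gives more than $k$ arcs of $F \setminus C^\star$ whose tails are $s$-reachable, provided $|F \setminus C^\star|$ is still large, which it is if $|F|$ exceeds, say, $(k+1)4^k$.

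The difficulty is getting the $t$-side condition simultaneously. For each $(u,v) \in F$, minimality of its cut means $G$ minus that cut has a $vt$-path; I want a single cut doing this for more than $k$ arcs at once. Here I would invoke the dual anti-isolation lemma (Lemma~\ref{lm:reverse-anti}) contrapositively. Suppose, for contradiction, that \emph{no} minimal $st$-cut of size at most $k$ simultaneously leaves more than $k$ arcs of $F$ with $s$-reachable tail and more than $k$ arcs with $t$-reachable head. Combined with the previous paragraph (where $C^\star$ already handles the $s$-side for all of $F$), this forces: for the cut $C^\star$, at most $k$ arcs of $F$ have $t$-reachable head in $G - C^\star$. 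Peel those off; the remaining arcs $(u,v)$ all have their $vt$-path destroyed by $C^\star$, i.e., $t$ is \emph{not} reachable from $v$ in $G - C^\star$, yet $t$ \emph{is} reachable from $v$ in $G - C_{(u,v)}$. Now apply the sunflower lemma to the sets $C_{(u,v)}$ (each of size at most $k$): if $|F|$ is large enough ($> k!\,(g(k)+k)^k \cdot 4^k$, say, which is $2^{\Oh(k^2)}$), we extract a sunflower of many such cuts with core $K$. Restricting to the petals, the ``private'' parts $C_{(u,v)} \setminus K$ are pairwise disjoint, and one checks that in the graph $G - K$ the head $v_i$ is reachable from $t$ in the reversed graph exactly when we delete $C_{(u,v_i)}\setminus K$ and not when we delete a different petal — setting up precisely the hypothesis of Lemma~\ref{lm:reverse-anti} in $G-K$ with source $t$. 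That lemma then caps the number of petals by $g(k)$, contradicting that we chose more than $g(k)+k$ of them. Hence the desired balanced cut exists.

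The main obstacle I anticipate is the bookkeeping to make the two reachability conditions compatible with a \emph{single} cut and to verify the exact isolation hypotheses of Lemmas~\ref{lm:anti-isolation} and \ref{lm:reverse-anti} — in particular, that after restricting to a sunflower the petals genuinely ``isolate'' their respective arcs (one must check that deleting petal $i$ does not accidentally reconnect $v_j$ to $t$; this follows because the surviving $v_j t$-path for $j\ne i$ avoids $C_{(u,v_j)} \setminus K$ but could a priori use arcs of $C_{(u,v_i)}\setminus K$, so one instead argues from the direction that $v_i$ is \emph{not} $t$-reachable in $G-C^\star \supseteq G - K - (C_{(u,v_i)}\setminus K)$ only if $C^\star$ is ``compatible'' with the sunflower core, which may require first also taking a sunflower on the $C^\star$-side or working relative to $G - C^\star$ throughout). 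Getting these quantifiers and the nesting of ``first fix an important separator, then take a sunflower'' in the right order is the delicate part; the function $h(k) = 2^{\Oh(k^2)}$ arises from composing one sunflower bound ($k!\,m^k$) with the polynomial-in-$4^k$ thresholds from anti-isolation.
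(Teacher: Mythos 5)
There is a genuine gap, and it is not just the bookkeeping you flag at the end: the pivot to a single common important separator $C^\star$ cannot be made to work. Pushing each $C_{(u,v)}$ towards $t$ maximizes reachability from $s$, and in doing so it generically destroys the $t$-side property you need, so the intermediate claim your contradiction rests on --- ``if $C^\star$ fails the $t$-side condition then $|F|$ is bounded by $2^{\Oh(k^2)}$'' --- is simply false. Concretely, take $k=1$ and $G$ a directed path $s \to a_1 \to a_2 \to \cdots \to a_n \to t$. Every arc is itself a minimal $st$-cut of size $1$, so $F$ may be taken to be all $n+1$ arcs, and every one of these cuts pushes to the unique important separator $C^\star=\{(a_n,t)\}$. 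In $G-C^\star$ all tails are reachable from $s$, but \emph{no} arc of $F\setminus C^\star$ has a head that reaches $t$. So $C^\star$ fails the $t$-side, yet $|F|$ is unbounded; the lemma still holds (any ``middle'' arc $(a_j,a_{j+1})$ splits $F$), but the splitting cut is not $C^\star$ and is invisible to your argument, which only extracts information from $C^\star$ failing. The same example also kills the sunflower step: the cuts $C_{(u,v)}$ are pairwise disjoint (a sunflower with empty core), yet for $j>i$ the head $v_j=a_{j+1}$ still reaches $t$ in $G-C_i$, so the isolation hypothesis of Lemma~\ref{lm:reverse-anti} fails. Disjointness of petals of a sunflower \emph{of cuts} says nothing about whether deleting $C_i$ disconnects $v_j$ from $t$; and the containment you invoke, $G-C^\star \supseteq G-K-(C_{(u,v_i)}\setminus K)$, would require $C^\star\subseteq C_{(u,v_i)}$, which pushing does not give.

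The paper's proof avoids both problems by never leaving the original cuts: it keeps $C_{uv}$ for each $(u,v)\in F$, and applies the sunflower lemma not to the cuts but to the \emph{reachability traces} $S_{uv}\cap U$ and $T_{uv}\cap W$ (tails of $F$-arcs reachable from $s$ in $G-C_{uv}$, resp.\ heads reaching $t$), restricted to those of size at most $2k$. There the sunflower structure itself supplies the isolation needed for Lemma~\ref{lm:anti-isolation} and Lemma~\ref{lm:reverse-anti}: a petal element $u_i$ lies in $S_{u_iv_i}$ but, being outside the core, in no other $S_{u_jv_j}$, so a sunflower of more than $g(k)$ such sets is impossible. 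This bounds the number of arcs with small traces on either side by $2^{\Oh(k^2)}$ (using a degree bound, also proved via anti-isolation), so once $|F|>h(k)$ some single arc $(u,v)$ has both $|S_{uv}\cap U|>2k$ and $|T_{uv}\cap W|>2k$, and its own cut $C_{uv}$ is the splitting cut. If you want to salvage your outline, the fix is essentially to adopt this: choose the splitting cut among the $C_{uv}$ themselves and put the sunflower on the reachability sets, rather than pushing to important separators and taking a sunflower of cuts.
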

\begin{proof}
  For each $(u,v) \in F$, let $C_{uv} \subseteq A$ be a minimal
  $s$-$t$-cut with $(u,v) \in C_{uv}$, with $|C_{uv}|\leq k$.
  Let $S_{uv}$ denote the set of vertices reachable from $s$
  in $G-C_{uv}$, and let $T_{uv}$ denote the set of vertices
  that reach $t$ in $G-C_{uv}$. From the observation
  stated before the lemma statement,
  by the minimality of $C_{uv}$ we have $u \in S_{uv}$
  and $v \in T_{uv}$. 
  Also let $U=\{u: (u,v) \in F\}$ and $W=\{v: (u,v) \in F\}$. 
  We first make a claim to lower-bound the size of $U$ and $W$.
  (This is standard, but we include a proof for completeness.)

  \begin{myclaim}
    For every vertex $v$, at most $2g(k)$ arcs incident to $v$
    participate in minimal $s$-$t$-cuts of size at most $k$.
  \end{myclaim}
  \begin{proof}
    We bound the number of in-arcs and out-arcs separately.
    For the in-arcs, consider the graph resulting by adding
    $k+1$ arcs $(v,t)$. Note that every minimal $s$-$t$-cut
    that contains an in-arc of $v$ remains a minimal $s$-$t$-cut.
    As in the proof of the anti-isolation lemma, we may push
    each such cut to an important separator, and this operation
    will not decrease the set of in-arcs of $v$ contained in the cut.
    Hence at most $g(k)$ in-arcs of $v$ participate
    in minimal $s$-$t$-cuts of cardinality at most $k$. 
    The proof bounding out-arcs is entirely analogous. 
  \cqed\end{proof}

  We make a further subsidiary claim.

  \begin{myclaim}
    There are at most $2^{\Oh(k^2)}$ arcs $(u,v) \in F$
    such that $|S_{uv}\cap U| \leq 2k$.
  \end{myclaim}
  \begin{proof}
    Consider the collection 
    $\mathcal{S}=\{S_{uv} \cap U: (u,v) \in F\}$. 
    Let $X_1,X_2,\ldots,X_r$ be a sunflower in $\mathcal{S}$ with core $K$ and let $C_i$ be the cut $C_{uv}$ such that $S_{uv} \cap U = X_i$ for $i \in [r]$.
    Pick a vertex $x_i \in X_i \setminus K$ for every $i \in [r]$. Then, 
    the vertices $s$ and $x_i$ together with cuts $C_i$ satisfy the prerequisities for the anti-isolation lemma.
    Consequently, $\mathcal{S}$ does not contain a sunflower of cardinality more than $g(k)$. 

    Hence by the sunflower lemma, it contains at most 
    $d!(g(k))^d$ distinct sets of cardinality $d$, 
    for each $1 \leq d \leq 2k$. Summing, and ignoring the 
    constant factor, we find that $\mathcal{S}$ contains
    only $2^{\Oh(k^2)}$ sets of size at most $2k$. By the 
    degree observation in the previous claim, every $u \in U$ is an 
    endpoint for at most $g(k)$ arcs $(u,v) \in F$;
    hence there are at most $g(k)2^{\Oh(k^2)}=2^{\Oh(k^2)}$ 
    arcs $uv \in F$ such that $|S_{uv} \cap U|\leq 2k$,
    as promised.
  \cqed\end{proof}
  
  The dual bound also holds.

  \begin{myclaim}
    There are at most $2^{\Oh(k^2)}$ arcs $(u,v) \in F$
    such that $|T_{uv}\cap W| \leq 2k$.
  \end{myclaim}
  \begin{proof}
    This proof is identical to the previous one (except 
    using the form Lemma~\ref{lm:reverse-anti} instead
    of the usual anti-isolation lemma). 
  \cqed\end{proof}

  Hence, if $|F|$ is large enough, there is at least one arc $(u,v)\in F$
  such that $|S_{uv}\cap U|, |T_{uv} \cap W| > 2k$. 
  Naturally, for every $u' \in S_{uv}\cap U$
  there is a corresponding arc $(u',v') \in F$
  and similarly for $T_{uv}$.  
  Since $C_{uv}$ contains at most $k$ arcs,
  this leaves more than $k$ further arcs of each type,
  and the cut $C_{uv}$ fits the requirements of this lemma.
  This finishes the proof.
\end{proof}

From here, Theorem~\ref{thm:dtw-bound} follows easily.
Recall that $T$ is a proposed well-linked set.
Since $G$ is $k$-cut-minimal, every vertex $v \in T$ is incident to some arc that 
participates in a minimal $st$-cut of size at most $k$. This gives us
a collection $F_T \subseteq A$ of arcs as in Lemma~\ref{lemma:balancedcuts},
with $|F_T|\geq |T|/2$. If $|F_T|>h(k)$, then the cut $C$ provided by the lemma
is a witness that $T$ is not well-linked. Hence $|T|$ is bounded by $2^{\Oh(k^2)}$,
and by Theorem~\ref{thm:well-linked} the directed treewidth of $G$ is bounded by $f(k)$.

\section{Conclusions}\label{sec:conc}
We have presented reductions showing $W[1]$- and nearly tight ETH-hardness
of \dirmc{} and \stor{}.

We would like to conclude with two open problems.
The first one is the most natural one: is \dirmc{}, parameterized by the size of the cutset,
fixed-parameter tractable for \emph{three} terminal pairs?
It seems that our main gadget (the $p$-grid that we use to check an auxiliary binary relation) inherently requires
four commodities to work: two that pass through it horizontally and vertically, and two additional for synchronizing the cutpoints
of the input and output bidirectional paths. On the other hand, we were not able to extract any combinatorial property of the
solutions in the three terminal case that would warrant an FPT algorithm.
As a related question, we would also like to ask whether the two terminal case is $W\!K[1]$-hard parameterized by the cutset, or whether it admits a so-called polynomial Turing kernel (see~\cite{turingkernels}). It is known not to admit a polynomial kernel under standard complexity-theoretical assumptions~\cite{cutkernel-lbs}, but nothing is known about Turing kernels (even conjecturally).

Secondly, we would like to repeat from~\cite{chainsat} the question of fixed-parameter tractability of the
$\ell$-\textsc{Chain SAT} problem, parameterized the cutset. In this problem (with a fixed integer $\ell$ in the problem
    description) we are given an integer $k$ and a set of $n$ binary variables, unary constraints, and constraints of the form
$(x_1 \Rightarrow x_2) \wedge (x_2 \Rightarrow x_3) \wedge \ldots \wedge (x_{\ell-1} \Rightarrow x_\ell)$;
one asks to delete at most $k$ constraints to get a satisfiable instance. 
It is easy to see that this problem can be cast as a cut problem in directed graphs via the natural implication/arc correspondence.
While it is reasonable to suspect that a strong structural result for $k$-cut-minimal graphs may lead
to an FPT algorithm for this problem, our lower bound methodology seems irrelevant, as it inherently requires different commodities.
%However, we would also like to point out that we don't know whether $\ell$-\textsc{Chain SAT} is FPT even on DAGs.

\paragraph{Acknowledgements.} We would like to acknowledge a number of insightful discussions 
with various people on the graph cut problems in directed graphs, in particular on $k$-cut-minimal graphs;
we especially thank Anudhyan Boral, Marek Cygan, Alina Ene, Tomasz Kociumaka, Stefan Kratsch, Daniel Loksthanov, D\'{a}niel Marx, Micha\l{} Pilipczuk, Saket Saurabh, and Micha\l{} W\l{}odarczyk.

\bibliographystyle{abbrv}
\bibliography{dir-cuts}

\end{document}